\newcommand{\eqn}[1]{(\ref{eqn:#1})}
\newcommand{\eq}[1]{(\ref{eq:#1})}
\newcommand{\thm}[1]{\hyperref[thm:#1]{Theorem~\ref*{thm:#1}}}
\newcommand{\cor}[1]{\hyperref[cor:#1]{Corollary~\ref*{cor:#1}}}
\newcommand{\defn}[1]{\hyperref[def:#1]{Definition~\ref*{def:#1}}}
\newcommand{\lem}[1]{\hyperref[lem:#1]{Lemma~\ref*{lem:#1}}}
\newcommand{\prop}[1]{\hyperref[prop:#1]{Proposition~\ref*{prop:#1}}}
\newcommand{\prob}[1]{\hyperref[prob:#1]{Problem~\ref*{prob:#1}}}
\newcommand{\assum}[1]{\hyperref[assum:#1]{Assumption~\ref*{assum:#1}}}
\newcommand{\fig}[1]{\hyperref[fig:#1]{Figure~\ref*{fig:#1}}}
\newcommand{\tab}[1]{\hyperref[tab:#1]{Table~\ref*{tab:#1}}}
\newcommand{\alg}[1]{\hyperref[alg:#1]{Algorithm~\ref*{alg:#1}}}
\renewcommand{\sec}[1]{\hyperref[sec:#1]{Section~\ref*{sec:#1}}}
\newcommand{\rmk}[1]{\hyperref[remark:#1]{Remark~\ref*{remark:#1}}}
\newcommand{\append}[1]{\hyperref[append:#1]{Appendix~\ref*{append:#1}}}
\newcommand{\fac}[1]{\hyperref[fac:#1]{Fact~\ref*{fac:#1}}}
\newcommand{\lin}[1]{\hyperref[lin:#1]{Line~\ref*{lin:#1}}}
\DeclareMathOperator{\poly}{poly}
\newcommand{\ket}[1]{\vert #1 \rangle}
\newcommand {\caA}[0]{\mathcal{A}}
\newcommand {\caD}[0]{\mathcal{D}}
\newcommand {\caE}[0]{\mathcal{E}}
\newcommand {\caF}[0]{\mathcal{F}}
\newcommand {\caM}[0]{\mathcal{M}}
\newcommand {\caO}[0]{\mathcal{O}}
\newcommand {\caP}[0]{\mathcal{P}}
\newcommand {\caR}[0]{\mathcal{R}}
\newcommand {\caS}[0]{\mathcal{S}}
\newcommand {\dbP}[0]{\mathbb{P}}
\newcommand {\dbE}[0]{\mathbb{E}}
\newcommand {\dbR}[0]{\mathbb{R}}
\newcommand {\Reg}[0]{\mathrm{Regret}}
\title{\LARGE Provably Efficient Exploration in Quantum Reinforcement Learning with Logarithmic Worst-Case Regret}
\author{Han Zhong\thanks{Equal contribution. Listing orders are random.} \thanks{Peking University. Email: \texttt{hanzhong@stu.pku.edu.cn}} \quad  Jiachen Hu$^*$\thanks{Peking University. Email: \texttt{nickh@pku.edu.cn}} \quad Yecheng Xue\thanks{Peking University. Email: \texttt{mycts@pku.edu.cn}} \quad Tongyang Li\thanks{Peking University. Email: \texttt{tongyangli@pku.edu.cn}} \quad Liwei Wang\thanks{Peking University. Email: \texttt{wanglw@cis.pku.edu.cn}}}
\date{}
\begin{document}
\maketitle

\begin{abstract}
    While quantum reinforcement learning (RL) has attracted a surge of attention recently, its theoretical understanding is limited. In particular, it remains elusive how to design provably efficient quantum RL algorithms that can address the exploration-exploitation trade-off. To this end, we propose 
    a novel UCRL-style algorithm that takes advantage of quantum computing 
    for tabular Markov decision processes (MDPs) with $S$ states, $A$ actions, and horizon $H$, and establish an $\mathcal{O}(\mathrm{poly}(S, A, H, \log T))$ worst-case regret for it, where $T$ is the number of episodes. Furthermore, we extend our results to quantum RL with linear function approximation, which is capable of handling problems with large state spaces. Specifically, we develop a quantum algorithm based on value target regression (VTR) for linear mixture MDPs with $d$-dimensional linear representation and prove that it enjoys $\mathcal{O}(\mathrm{poly}(d, H, \log T))$ regret. Our algorithms are variants of UCRL/UCRL-VTR algorithms in classical RL, which also leverage  
    a novel combination of lazy updating mechanisms and 
    quantum estimation subroutines. This is the key to breaking the $\Omega(\sqrt{T})$-regret barrier in classical RL. To the best of our knowledge, this is the first work studying the online exploration in quantum RL with provable logarithmic worst-case regret.
    \end{abstract}

\section{Introduction}
\label{sec:intro}

Reinforcement learning (RL) is ubiquitous with wide applications~\citep{sutton2018reinforcement}. RL studies how agents take actions in an environment with the goal of maximizing cumulative reward. In light of this, a fundamental problem in RL is the balance between exploration (of uncharted territory) and exploitation (of current knowledge). This is typically characterized by the \emph{regret} defined as the cumulative suboptimality of the policies executed by the algorithm.

One of the most notable models of RL is the Markov decision process (MDP). Among MDPs, a tabular MDP is arguably the simplest model by storing all state and action pairs in a table. Tabular MDPs have $\cO(\poly(S,A,H)\cdot\sqrt{T})$ regret~\citep{jaksch2010near,azar2017minimax},
where $S$ and $A$ are the number of states and actions, respectively, $H$ is the length of each episode, and $T$ is the number of episodes. However, in practice $S$ and $A$ can have formidable or even infinite sizes. One typical solution is to employ linear function approximation. 
For a linear mixture MDP with a $d$-dimensional linear representation for the transition kernel, the regret can be improved to $\cO(\poly(d,H)\cdot\sqrt{T})$ \citep{ayoub2020model,cai2020provably,zhou2021nearly}.

\looseness=-1 Nevertheless, for MDPs in general including tabular MDPs and linear mixture MDPs, the fundamental difficulty lies in the $\Omega(\sqrt{T})$ lower bound on their regrets~\citep{jaksch2010near,jin2018q,zhou2021nearly}, and it solicits essentially novel ideas to break this barrier. This gives rise to \emph{quantum computing}, which is known to achieve acceleration for some relevant problems such as search~\citep{grover1997quantum}, counting~\citep{brassard2002amplitude,nayak1999quantum}, mean estimation~\citep{montanaro2015quantum,li2019entropy,hamoudi2019Chebyshev,hamoudi2021subgaussian,cornelissen2022near}, etc.

In this paper, we study \emph{quantum reinforcement learning}~\citep{jerbi2022quantum, meyer2022survey}, a combination of reinforcement learning and quantum computing, where the agent interacts with the RL environment by accessing quantum oracles. This is in great contrast to classical reinforcement learning, where the agent interacts with the environment by sampling. 
The quantum oracles enable us to estimate models or value functions more efficiently than classical sampling-based algorithms,
thus breaking the $\Omega(\sqrt{T})$ regret lower bound. For example, \citet{wan2022quantum} showed that logarithmic regret is possible in quantum bandits, a simpler task than quantum reinforcement learning. Currently, whether quantum algorithms can efficiently solve MDPs remains elusive. The only exception is~\citet{wang2021RL}, which developed fast quantum algorithms for solving tabular MDPs. However, their work requires a generative model~\citep{kakade2003sample}, thus it did not address the exploration-exploitation trade-off, a core challenge in online RL. In this work, we focus on online quantum RL and aim to answer the following question:
\begin{center}
    \emph{Can we design algorithms that enjoy logarithmic worst-case regret for online quantum RL?}
\end{center}

\paragraph{Contributions.}
We study the 
exploration problem in quantum reinforcement learning 
and establish $\poly(\log T)$ regret bound, where $T$ is the number of episodes. In specific,
\begin{itemize}[leftmargin=*]
    \item For tabular MDPs, we propose a novel algorithm Quantum UCRL (\alg{ucrl2_q}) and prove the worst-case regret guarantee of $\cO(\poly(S, A, H, \log T))$. 
    \item For linear mixture MDPs where the transition kernel permits a $d$-dimensional linear representation, we develop the algorithm Quantum UCRL-VTR (\alg{quantum:vtr}) with worst-case $\cO(\poly(d, H, \log T))$ regret guarantee.
\end{itemize}
Hence, we break the $\Omega(\sqrt{T})$-regret barrier in classical RL as desired. To our best knowledge, we present \emph{the first line of study on exploration in online quantum RL}.

\paragraph{Challenges and Technical Overview.} Classical RL lower bound shows that $\sqrt{T}$ regret is inevitable and we need to use quantum tools to achieve speedup. Our key observation is that quantum computing can achieve speedup in mean estimation problems (see \lem{quantum_amplitude_est} and \lem{quantum:mean}). { Intuitively, as quantum mean estimation provides more accurate estimations, it allows the agent to engage in more conservative exploration, resulting in a more refined regret bound.} There are three main challenges to adapting the quantum mean estimation subroutines in RL problems:
\begin{itemize}[leftmargin=*]
    \item The first challenge is how to collect data for the quantum mean estimation subroutines to estimate the model. In quantum RL, the observation is a sequence of quantum states. However, upon measuring these quantum states, their collapse ensues, avoiding their utilization in the model estimation. Conversely, if we abstain from measuring these quantum states, we face the challenge of how to design strategic exploration policy.
    \item  The second challenge is the difficulty of reusing the quantum samples. As we have explained above, the quantum samples will collapse after each estimation, which means that the desired quantum algorithms cannot update model estimation in each episode similar to the classical UCRL and UCRL-VTR algorithms.
    \item Classical linear RL analysis heavily relies on martingale concentration \citep{ayoub2020model}. However, in quantum RL, there is no direct counterpart to martingale analysis.
\end{itemize}

{ We propose the following novel techniques to resolve the above challenges, more details and technical comparisons with existing works are deferred to \append{appendix:novelty}.} 
  \begin{itemize}[leftmargin=*]
    \item  For the first challenge, we observe that the quantum oracles are able to imitate the classical sampling at a cost of an additional $H$ term in the regret (see the CSQA subroutine in \append{appendix_qsample}). Thus, we can collect quantum samples by sampling a classical $(s, a)$ first, and then querying the transition oracle on $|s, a\rangle$ to obtain a quantum sample used to estimate the model.
    \item To mitigate the second challenge, we use the doubling trick to design lazy-updating algorithms, which address the challenge of reusing quantum samples by estimating the model less frequently while still achieving efficient quantum speedup. Specifically, for tabular MDPs, we propose the Quantum UCRL algorithm, a variant of UCRL2 \citep{jaksch2010near}, which utilizes the doubling trick to the visiting times of each state-action pair to ensure the lazy updating and adopts the quantum multi-dimensional amplitude estimation (\lem{quantum_amplitude_est}) to achieve quantum speedup. For linear mixture MDPs, we develop the Quantum UCRL-VTR algorithm (\alg{quantum:vtr}), which is a variant of UCRL-VTR \citep{ayoub2020model}. The Quantum UCRL-VTR algorithm performs the doubling trick in the determinant of the covariance matrix to ensure the lazy updating property. { This algorithm design, absent in previous work on
    quantum RL with a generative model \citep{wang2021quantum}, connects online exploration in quantum RL with strategic lazy updating mechanisms.}
    \item We redesign the lazy updating frequency and propose an entirely new technique to estimate features (\alg{estimate:feature}) for building the confidence set for the model. Our algorithm design and analysis successfully handle some subtle technical issues like estimating an unknown feature in each episode up to some accuracy that depends on the estimated feature itself (cf. \eqref{eq:est:goal}). To the best of our knowledge, this algorithm design (\alg{estimate:feature}) and theoretical analysis are entirely novel in the literature on (classical and quantum) RL theory. See ``feature estimation'' part of \sec{alg:linear} and \append{appendix:novelty} for more details. 
  \end{itemize}


\section{Preliminaries}
\label{sec:prelim}

\looseness=-1 \textbf{Episodic Markov Decision Processes.}
An episodic finite-horizon Markov decision process (MDP) can be described by a tuple $\caM = (\cS, \cA, H, \cP = \{\cP_h\}_{h \in [H]}, R = \{R_h\}_{h \in [H]})$, where $\cS$ is the state space, $\cA$ is the action space, $\cP_h$ is the transition kernel at step $h$, $R_h$ is the reward function at the $h$-th step. For simplicity, we assume the reward function is known and deterministic. This is a reasonable assumption since learning transition kernels is more challenging than learning reward functions and our subsequent results are ready to be extended to the unknown stochastic reward functions. 

At the beginning of each episode, the agent determines a policy $\pi = \{\pi_h: \cS \mapsto \Delta(\cA)\}_{h \in [H]}$. Without loss of generality, we assume that each episode starts from a fixed state $s_1 \in \cS$\footnote{Our subsequent results can be easily extended to the case where the initial state is sampled from a fixed distribution.}. At each step $h$, the agent receives a state $s_h$, takes an action $a_h \sim \pi_h(\cdot \given s_h)$, receives the reward $r_h = R_h(s_h, a_h)$, and transits to the next state $s_{h+1} \sim \cP_h(\cdot \given s_h, a_h)$. The episode ends after receiving $r_H$.

Given a policy $\pi$, its value function $V_h^{\pi}\colon \cS \mapsto \RR$ at the $h$-th step is defined as
\$ 
V_h^\pi(s) = \EE_{\pi} \bigg[ \sum_{h' = h}^H R_{h'}(s_h, a_h) \,\bigg|\, s_h = s \bigg],
\$
where the expectation $\EE_{\pi}$ is taken with respect to the randomness induced by the transition kernels and policy $\pi$. Correspondingly, given a policy $\pi$, we define its $h$-th step Q-function $Q_h^\pi\colon \cS \times \cA \mapsto \RR$ as
\$ 
Q_h^\pi(s, a) = \EE_{\pi} [ \sum_{h' = h}^H R_{h'}(s_h, a_h) \,|\, s_h = s, a_h = a ]. 
\$
It is well known that the value function and the Q-function satisfy the following Bellman equation:
\begin{equation}
    \begin{aligned} \label{eq:bellman}
         Q_h^\pi(s, a) = R_h(s, a) + [\PP_h V_{h+1}^{\pi}](s, a),
\quad V_{h}^\pi(s) = \la Q_{h}^{\pi}(s, \cdot), \pi_h(\cdot \mid s) \ra_{\cA},\quad V_{H+1}^{\pi}(s) = 0,
    \end{aligned}
\end{equation}
for any $(s, a) \in \cS \times \cA$. Here, $\PP_h$ is the operator defined as 
\# \label{eq:def:PP}
(\PP_h V)(s, a) = \EE\left[ V(s') \given s' \sim \cP_h(\cdot \given s, a)\right]
\#
for any function $V\colon\cS \mapsto \RR$. The online exploration problem in reinforcement learning requires learning the optimal policy $\pi^*$ by interacting with the episodic MDP, which by definition maximizes the value function, i.e., $V_h^{\pi^*}(s) = \max_{\pi} V_h^\pi(s)$ for all $s \in \cS$ and $h \in [H]$. To simplify the notation, we use the shorthands $V_h^* = V_h^{\pi^*}$ and $Q_h^* = Q_h^{\pi^*}$ for all $h \in [H]$. We also use $d^{\pi}_h(\cdot) := \mathrm{Pr}_{\pi}(s_h = \cdot)$ to denote the occupancy measure over the states.

We use the notion of regret to measure the performance of the agent. Suppose policy $\pi^t$ is executed in the $t$-th episode for $t \in [T]$, then the regret for $T$ episodes is defined as 
\$
\mathrm{Regret}(T) = \sum_{t = 1}^T [ V_1^*(s_1) - V_1^{\pi^t}(s_1) ].
\$
We aim to design algorithms minimizing the regret. 

\paragraph{Linear Function Approximation.} To tackle the large state space, we also consider the function approximation. In this paper, we focus on the linear mixture MDP~\citep{ayoub2020model,modi2020sample,cai2020provably}, where the transition kernel is linear in a feature map.

\begin{definition}[Linear Mixture MDP] \label{def:linear:mixture}
    We say an MDP $(\cS, \cA, H, \cP, R)$ is a linear mixture MDP if there exists a known feature map $\psi\colon\cS \times \cA \times \cS \rightarrow \RR^d$ and unknown vectors $\{\theta_h \in \RR^d\}_{h \in [H]}$ with $\|\theta_h\|_2 \le 1$ such that 
    \#
    \cP_h(s' \given s, a) = \psi(s, a, s')^\top \theta_h
    \#
    for all $(s, a, s', h) \in \cS \times \cA \times \cS \times [H]$. Moreover, for any $(s, a) \in \cS \times \cA$ and $V\colon\cS \mapsto [0,1]$, we assume that 
    \# \label{eq:bounded:feature}
    \|\phi_{V}(s, a)\|_2 \le 1, 
    \#
    where we use the notation that
    \# \label{eq:def:phi}
    \phi_V(s, a) = \int_{s'} \psi(s, a, s') \cdot V(s') \mathrm{d} s'. 
    \#
\end{definition}
When $d = |\cS|^2 |\cA|$ and the feature map $\psi(s, a, s')$ being the canonical basis $\mathbf{1}_{s, a, s'} \in \RR^{|\cS|^2 |\cA|}$, linear mixture MDPs reduce to the tabular MDP. Also, linear mixture MDPs include another linear type MDPs proposed by \citet{yang2020reinforcement} as a special case. Finally, we remark that \citet{yang2019sample,jin2020provably} study the linear MDP, which is different from the linear mixture MDP. Linear mixture MDPs and linear MDPs are incomparable in the sense that one cannot include the other as the special case.

\section{Quantum Reinforcement Learning}
 \label{sec:quantum_rl}

In this section, we introduce basic concepts about quantum computing and quantum-accessible reinforcement learning. Further explanations are also given in \append{append_qdetails}.

\subsection{Quantum Computing}
\label{sec:quantum_compute}

\textbf{Basics.} Consider a classical system that has $d$ different states, indexed from $0$ to $d-1$. We call these states $\ket{0},\ldots,\ket{d-1}$, where $\ket{i}$ is identical to the classical state $i$. A \emph{quantum state} $\ket{v}$ is a \emph{superposition} of the classical states
\begin{align}\label{eq:quantum-state}
\ket{v} = \sum_{i = 0}^{d-1}v_i\ket{i}
\end{align}
where $v_i \in \mathbb{C}$ is called the \emph{amplitude} of $\ket{i}$, $\ket{v}$ is normalized and $\sum_{i = 0}^{d-1}|v_i|^2 = 1$. 

Mathematically, $\ket{0},\ldots,\ket{d-1}$ forms an orthonormal basis of a $d$-dimensional Hilbert space. As an example, a classical state $s \in \caS$ is related to a quantum state $\ket{s} \in \bar{\caS}$ in quantum reinforcement learning, where $\bar{\cS}$ is a large Hilbert space containing all the superposition of classical states. We call $\{\ket{s}\}_{s\in \caS}$ as the computational basis of $\bar{\cS}$. Quantum states from different Hilbert spaces can be combined with tensor product. For notation simplicity, we use $\ket{v}\ket{w}$ or $\ket{v,w}$ to denote the tensor product $\ket{v}\otimes\ket{w}$.
Operations in quantum computing are \emph{unitaries}, i.e., a linear transformation $U$ such that $U^{\dagger}U=UU^{\dagger}=I$ ($U^{\dagger}$ is the conjugate transpose of $U$).

\paragraph{Information transfer between quantum and classical computers.} 
The information in a quantum state cannot be ``seen" directly. Instead, to observe a quantum state $\ket{v}$, we need to perform a \emph{quantum measurement} on it. The measurement gives a classical state $i$ with probability $|v_i|^2$, and the measured quantum state becomes $\ket{i}$, losing all its information. As an example, a measurement for quantum state $\ket{i}$ returns $i$ with probability exactly $1$.

Quantum access to the input data is encoded in a unitary operator called \emph{quantum oracle}. There are different common input models in quantum computing, such as the probability oracle (\defn{quantum_probability_oracle}) and the binary oracle (\defn{quantum_binary_oracle}). By quantum oracles the input data can be read in \emph{superposition}, i.e., they allow us to efficiently obtain a quantum state $\ket{\phi} = \sum_{s\in\caS} a_s \ket{s}$ defined in \eq{quantum-state},
and perform operations on these $\ket{s}$ ``in parallel", which is the essence of quantum speedups.

In our algorithm, we consider the \emph{query complexity} as the number of queries to the quantum oracles.

\paragraph{Quantum multi-dimensional amplitude estimation and multivariate mean estimation.} To achieve quantum speedup, we exploit two standard quantum subroutines: the quantum multi-dimensional amplitude estimation~\citep{van2021quantum} and quantum multivariate mean estimation~\citep{cornelissen2022near} stated below.

\begin{lemma}[Quantum multi-dimensional amplitude estimation, Rephrased from Theorem 5 of~\citealt{van2021quantum}]
\label{lem:quantum_amplitude_est}
Assume that we have access to the probability oracle $U_p\colon\ket{0} \rightarrow \sum_{i = 0}^{n-1} \sqrt{p_i}\ket{i}\ket{\phi_i}$ for an $n$-dimensional probability distribution $p$ and ancilla quantum states\footnote{Ancilla quantum states help and broaden the scope of quantum computing tasks. The simplest case is that all states $|
\phi_{i}\rangle$ are identical and we can remove this state. In general, \lem{quantum_amplitude_est} and \lem{quantum:mean} hold for any such oracle $U_{p}$.} $\{\ket{\phi_i}\}_{i=0}^{n-1}$.
Given parameters $\varepsilon > 0$, $\delta > 0$, 
an approximation $\tilde{p}$ such that $||p-\tilde{p}||_1 \leq \varepsilon$ can be found with probability $\geq 1-\delta$ using 
$\cO(n\log(n/\delta)/\varepsilon)$
quantum queries to $U_p$ and its inverse.
\end{lemma}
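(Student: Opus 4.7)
The plan is to reduce the multi-dimensional task to $n$ coordinate-wise amplitude estimations and combine them with a boosting step and a union bound. First, for each coordinate $i\in\{0,1,\ldots,n-1\}$, I would isolate the amplitude of $\ket{i}$ by taking the projector $\Pi_i=\ket{i}\bra{i}$ on the first register, so that $\|\Pi_i U_p\ket{0}\|^{2}=p_i$. This exactly matches the setup for standard (one-dimensional) quantum amplitude estimation of Brassard--H{\o}yer--Mosca--Tapp: with $M$ controlled calls to $U_p$ and $U_p^{\dagger}$ it outputs an estimate $\hat{p}_i$ satisfying $|\hat{p}_i-p_i|=\cO(\sqrt{p_i}/M+1/M^{2})$ with constant success probability.

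Second, to make all $n$ estimates accurate simultaneously I would drive the per-coordinate failure probability below $\delta/n$ via the standard median trick: repeat the amplitude estimation subroutine $R=\Theta(\log(n/\delta))$ times for each $i$ and take the median $\tilde{p}_i$. A Chernoff argument shows that $\tilde{p}_i$ meets the same additive bound with probability at least $1-\delta/n$, so a union bound over $i$ yields simultaneous accuracy for all coordinates with probability at least $1-\delta$.

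Third, I would convert the coordinate-wise additive errors into an $\ell_{1}$ guarantee. By Cauchy--Schwarz together with $\sum_{i}p_i=1$ we have $\sum_{i}\sqrt{p_i}\le\sqrt{n}$, so the total $\ell_{1}$ error is $\cO(\sqrt{n}/M+n/M^{2})$; tuning $M$ forces this quantity below $\varepsilon$. Combining these ingredients gives the claimed tomography guarantee.

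The main obstacle is the query bookkeeping. Running one independent amplitude estimation per coordinate costs $\cO(nMR)$ queries, which falls short of the advertised $\cO(n\log(n/\delta)/\varepsilon)$ bound by roughly a $\sqrt{n}$ factor. The technical heart of the argument is therefore to sharpen this step: instead of treating the $n$ projectors $\Pi_i$ in isolation, one has to run a coherent multi-dimensional amplitude-estimation procedure that extracts all $n$ phases from the \emph{same} sequence of controlled applications of $U_p$ and $U_p^{\dagger}$, as developed by van Apeldoorn. Verifying that the joint phase estimation rounds all $n$ amplitudes to the desired precision simultaneously, with only a logarithmic overhead from boosting and the union bound, is where the bulk of the work sits; the remaining pieces (reduction to projectors, median boosting, Cauchy--Schwarz to pass from coordinate-wise to $\ell_{1}$ error) are routine.
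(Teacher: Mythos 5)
This lemma is not proved in the paper at all: it is imported verbatim (``Rephrased from Theorem 5 of van Apeldoorn, 2021'') and used as a black-box subroutine, so the only ``proof'' the paper offers is the citation. Judged as a self-contained argument for the stated query bound, your proposal has a genuine gap, and you in fact name it yourself. Steps one through three (reduce to the projectors $\Pi_i=\ket{i}\bra{i}\otimes I$, run Brassard--H{\o}yer--Mosca--Tapp amplitude estimation per coordinate, boost by medians and a union bound, then pass to $\ell_1$ via $\sum_i\sqrt{p_i}\le\sqrt{n}$) are correct but only yield $\cO(n^{3/2}\log(n/\delta)/\varepsilon)$ queries: with a common precision parameter $M$ the $\ell_1$ error is $\cO(\sqrt{n}/M+n/M^2)$, forcing $M=\Theta(\sqrt{n}/\varepsilon)$, and even an optimal non-uniform allocation of per-coordinate precisions $\varepsilon_i$ subject to $\sum_i\varepsilon_i\le\varepsilon$ still costs $\Theta(n^{3/2}/\varepsilon)$ in the worst case (e.g.\ $p_i\equiv 1/n$). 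The $\sqrt{n}$ saving that separates this from the advertised $\cO(n\log(n/\delta)/\varepsilon)$ is precisely the content of van Apeldoorn's Theorem 5, and your proposal does not supply it: the final paragraph replaces the argument with an appeal to ``a coherent multi-dimensional amplitude-estimation procedure \ldots as developed by van Apeldoorn,'' i.e.\ it cites the theorem being proved. Moreover, the mechanism you gesture at --- extracting all $n$ phases from the same sequence of controlled calls and checking that ``joint phase estimation rounds all $n$ amplitudes'' simultaneously --- is not substantiated and is not obviously sound; if simultaneous phase estimation with only logarithmic overhead were automatic, the coordinate-wise analysis would already give the improved bound, which it does not.

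So either the lemma should be treated as the paper treats it, namely as an external result invoked by citation, or the missing core must actually be carried out (van Apeldoorn's construction, not per-coordinate BHMT). As a side remark, your weaker self-contained bound $\cO(n^{3/2}\log(n/\delta)/\varepsilon)$ would still suffice to make the paper's tabular argument go through with a $\poly(\log T)$ regret, only with a worse polynomial dependence on $S$ in \eqref{eqn:estimator_tabular} and hence in \thm{regret_tabular}; but it does not prove the lemma as stated.
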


\begin{lemma}[Quantum multivariate mean estimation, Rephrased from Theorem 3.3 of~\citealt{cornelissen2022near}] \label{lem:quantum:mean}
Let $X\colon\Omega\rightarrow \mathbb{R}^d$ be a $d$-dimensional bounded variable on probability space ($\Omega$, $p$) such that $||X||_2 \leq C$ for some constant $C$. Assume that we have access to (i) the probability oracle $U_p\colon|0\rangle\rightarrow\sum_{\omega\in \Omega}\sqrt{p(\omega)}\ket{\omega}\ket{\phi_{\omega}}$ for ancilla quantum states $\{\ket{\phi_{\omega}}\}_{\omega \in \Omega}$; and (ii) the binary oracle $U_X\colon \ket{\omega}\ket{0}\rightarrow\ket{\omega}\ket{X(\omega)}, \ \forall \omega \in \Omega$. Given two reals $\delta \in (0,1)$ and $\varepsilon > 0$, there exists a quantum algorithm outputs a mean estimate $\tilde{\mu}$ of $\mu = \mathbb{E}[X]$ such that $\|\tilde{\mu}\|_2 \leq C$ and $\|\tilde{\mu} - \mu\|_2 \leq \varepsilon$
with probability $\geq 1-\delta$, using $\cO(C\sqrt{d}\log(d/\delta)/\varepsilon)$ quantum queries to $U_p$, $U_X$, and their inverses.
\end{lemma}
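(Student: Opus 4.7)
The plan is to bootstrap a Montanaro-style univariate quantum mean estimator into a multivariate procedure whose query cost scales as $\sqrt{d}$ rather than $d$ in the dimension. First, I would recall the univariate primitive: given oracles $U_p$ and $U_Y$ for a scalar random variable $Y$ with $|Y|\le C$, quantum amplitude estimation applied to the function-value register of $U_Y U_p\ket{0}$ returns an estimate of $\EE[Y]$ to additive error $\eta$ with constant success probability using $\cO(C/\eta)$ queries, and the standard median-of-$\log(1/\delta)$ trick upgrades this to success probability $1-\delta$. This handles the $d=1$ case with the claimed scaling.

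For the $d$-dimensional case with $\|X\|_2\le C$, the naive strategy of estimating each coordinate separately to precision $\varepsilon/\sqrt{d}$ costs $\cO(Cd^{3/2}\log(d/\delta)/\varepsilon)$ queries, a $\sqrt{d}$-factor worse than the target. To remove this overhead I would package the $d$ coordinate estimates into a single quantum subroutine: build a block encoding of the mean vector $\mu$ from $U_p$ and $U_X$, and then run quantum phase estimation on an operator whose spectrum encodes projections of $\mu$ against an orthonormal basis (or, if more convenient, a suitable tight frame) simultaneously. The analysis then reduces to a variance-aware amplitude-estimation bound in which the \emph{sum of squared} errors across the $d$ directions is $\cO(C^2/T^2)$ after $T$ total queries, so that the aggregated $L_2$ error is $\cO(C\sqrt{d}/T)$. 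Inverting this relation and absorbing the $\log(d/\delta)$ factor from median boosting yields the claimed complexity $\cO(C\sqrt{d}\log(d/\delta)/\varepsilon)$, after which rescaling the output (if it overshoots) ensures $\|\tilde{\mu}\|_2\le C$.

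The main obstacle is justifying the $\sqrt{d}$ rather than $d$ scaling in the aggregated error: one has to show that the errors from the $d$ directional amplitude estimations accumulate in $L_2$ rather than in $L_1$ or $L_\infty$. This relies crucially on exploiting the joint oracle $U_p\otimes U_X$ so that a single run of phase estimation extracts $d$ correlated estimates whose joint covariance is controlled by the single scalar bound $\|X\|_2\le C$, rather than by $d$ independent per-coordinate bounds (which would only yield an $L_\infty$-to-$L_2$ conversion loss of $\sqrt{d}$, reproducing the suboptimal $d^{3/2}$ rate). Once this variance-aware estimate is in hand, the remaining steps — tracking ancilla registers, accounting for calls to $U_p^{-1}$ and $U_X^{-1}$, boosting the success probability, and normalizing the final output — are essentially bookkeeping.
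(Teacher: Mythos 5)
This lemma is not proved in the paper at all: it is quoted as an external result (a rephrasing of Theorem 3.3 of Cornelissen, Hamoudi, and Jerbi, 2022), and the paper only uses it as a black-box subroutine. So the relevant comparison is with the known proof of that cited theorem, and measured against it your sketch has a genuine gap at exactly the step that matters. Your first paragraph (univariate amplitude estimation plus median boosting) is fine, and your diagnosis that coordinate-wise estimation loses a factor over the target is correct. But the passage from there to the $\sqrt{d}$ scaling is asserted rather than argued: ``build a block encoding of the mean vector $\mu$'' is not a well-defined object (block encodings encode operators, and a state-preparation or phase oracle for $\mu$ is precisely what one is trying to construct), and ``run quantum phase estimation on an operator whose spectrum encodes projections of $\mu$ against an orthonormal basis simultaneously'' begs the question --- phase estimation extracts one eigenphase per run, and extracting $d$ directional means with only $\sqrt{d}$ total overhead is the entire difficulty. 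The ``variance-aware amplitude-estimation bound'' in which the squared errors over $d$ directions sum to $\cO(C^2/T^2)$ after $T$ queries is exactly the conclusion to be proved; nothing in your sketch supplies a mechanism for it, and running amplitude estimation in $d$ orthogonal directions does not give it.

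The missing idea in the actual proof is quantum gradient estimation in the style of Jordan and Gily\'en et al.: one uses $U_p$ and $U_X$ to implement an approximate phase oracle applying phases proportional to $\langle t, X(\omega)\rangle$ (so that, after averaging over the randomness, the state acquires phases close to $\langle t, \mu\rangle$) for a superposition of grid points $t$, and then a $d$-dimensional inverse quantum Fourier transform reads off all coordinates of $\mu$ at once. The error analysis there is where the hypothesis $\|X\|_2 \le C$ (equivalently a bound on $\mathbb{E}\|X\|_2^2$, hence on $\operatorname{tr}\Sigma$) enters, via a truncation/concentration argument controlling the deviation of the realized phases from the linear function $\langle t,\mu\rangle$; the $\log(d/\delta)$ comes from boosting. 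Your proposal gestures at ``exploiting the joint oracle'' but does not contain this construction, so as written it does not establish the $\cO(C\sqrt{d}\log(d/\delta)/\varepsilon)$ bound. Given that the paper itself simply cites this result, the appropriate course is either to cite it as the paper does or to reproduce the gradient-estimation argument in full; the phase-estimation/block-encoding route you describe would need to be replaced, not merely elaborated.
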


\subsection{Quantum-Accessible Environments}
\label{sec:quantum_access}

In this paper, we hope to study the online exploration problem in quantum reinforcement learning by leveraging powerful tools in quantum computing. To this end, we introduce the quantum-accessible RL environments in this section.

Suppose we have two Hilbert spaces $\bar{\caS}$ and $\bar{\caA}$ that contain the superpositions of the classical states and actions. We use $\{\ket{s}\}_{s \in \caS}$ and $\{\ket{a}\}_{a \in \caA}$ as the computational basis in $\bar{\caS}$ and $\bar{\caA}$, respectively. Following the quantum-accessible environments studied by previous works \citep{wang2021RL, jerbi2022quantum, wiedemann2022quantum}, we use two quantum oracles to access the episodic MDP $\caM$ for each step $h \in [H]$: 

\begin{itemize}[leftmargin=*]
    \item The transition oracle $\bar{\caP} = \{\bar{\caP}_h\}_{h=1}^H$ that returns the superposition over next states according to the transition probability $\caP_h$, which is a quantum probability oracle. 
    \begin{align}\label{eqn:transition_oracle}
    \bar{\caP}_h: \ket{s_h, a_h}  \ket{0}  \to \ket{s_h, a_h} \otimes \sum_{s_{h+1}} \sqrt{\caP_h(s_{h+1} \mid s_h, a_h)} \ket{s_{h+1}}. 
    \end{align}
    \item The reward oracle $\bar{\caR} = \{\bar{\caR}_h\}_{h=1}^H$ that returns the binary representation of the reward.\footnote{Note that we have assumed that the reward $R$ is known for simplicity. It is straightforward to extend to the unknown reward setting with quantum access to $\bar{\caR}$~\citep{auer2008near}.}
    \begin{align}
    \label{eqn:reward_oracle}
    \bar{\caR}_h: \ket{s_h, a_h} \ket{0} \to \ket{s_h, a_h} \ket{R_h(s_h, a_h)}.
    \end{align}
\end{itemize}


As long as a classical RL task can be written as a computer program with source code, we can perform our quantum RL algorithm with these quantum oracles and their inverse. This is because such classical programs can in principle be written as a Boolean circuit whose output follows the distribution $s_{h+1} \sim \caP_h(\cdot \mid s_h, a_h)$, and it is known that any classical circuit with $N$ logic gates can be converted to a quantum circuit consisting of $O(N)$ logic gates that can compute on any quantum superposition of inputs (see for instance Section 1.5.1 of~\cite{nielsen2002quantum} and~\cite{wang2021RL}), which gives the quantum state $\sqrt{\caP_h(s_{h+1} \mid s_h, a_h)} \ket{s_{h+1}}$ in (\ref{eqn:transition_oracle}). On the other hand, we also note that our oracle is a quantum counterpart for the online setting in RL, which is weaker than the generative setting: at a known state $s_h, a_h$ we can apply the transition in superposition, but we do not necessarily know how to apply transition in superposition for any state. See \sec{quantum_exploration} for more details.

In classical RL environments, the agent interacts with any MDP $\caM$ by determining a stochastic policy $\pi = \{\pi_h: \cS \mapsto \Delta(\cA)\}_{h \in [H]}$. Analogously, we introduce quantum-evaluation of a policy in quantum-accessible RL environments \citep{wiedemann2022quantum, jerbi2022quantum}. The quantum evaluation of a classical policy $\pi$ is $H$ unitaries $\Pi = \{\Pi_h\}_{h=1}^H$ such that 
\begin{align}
\label{eqn:quantum_policy}
\Pi_h : \ket{s} \ket{0} \to \ket{s} \sum\nolimits_{a} \sqrt{\pi_h(a \mid s)} \ket{a}\quad\forall h \in [H].
\end{align}
The unitary $\Pi_h$ quantizes the randomness of $\pi_h$ into the quantum state $\sqrt{\pi_h(a \mid s)} \ket{a}$. Any policy that is classically computable can be converted to such unitaries in quantum computation efficiently \citep{grover2002creating, jerbi2022quantum}.

Quantum probability oracles are more powerful than classical sampling in the sense that we can simulate classical sampling of $s_h \sim d^{\pi}_h$ by quantum oracles using the Classical Sampling via Quantum Access (CSQA) subroutine (\alg{csqa}) for an input policy $\pi$ and target step $h$. The CSQA subroutine computes a quantum state $\varphi_h = \sum_{s} \sqrt{d^\pi_h(s)} \ket{s}$ using one call to $\bar{\caP}_{h'}$ and $\Pi_{h'}$ for each $h' < h$ (the classical sampling requires one sample from $\caP_{h'}$ and $\pi_{h'}$ correspondingly). Therefore, it suffices to measure $\varphi_h$ to obtain a classical sample $s_h \sim d^{\pi}_h$.

\subsection{Quantum Exploration Problem}
\label{sec:quantum_exploration}

In the classical exploration problem, the agent interacts with the environment by executing policy $\pi_h$ to take an action $a_h \sim \pi_h(\cdot \mid s_h)$ based on the current state $s_h$, and then transiting to $s_{h+1} \sim \caP_h(\cdot \mid s_h, a_h)$ for each step $h$ in an episode. In this paper, we study the exploration problem in quantum-accessible environments (also called the quantum exploration problem), which is a natural extension of the classical exploration problem. 

In the quantum exploration problem, the agent ``executes" a policy $\pi_h$ by calling the quantum evaluation $\Pi_h$ of $\pi_h$, and then ``transits" to the next state by calling the transition oracle $\bar{\caP}_h$. This correspondence is natural in that $\Pi_h$ and $\bar{\caP}_h$ exactly quantize the randomness of $\pi_h$ and $\caP_h$. The oracles $\Pi_h$ and $\bar{\caP}_h$ are allowed to be called once in an episode. 

Prior to this paper, \citet{wang2021RL} studied the learning of optimal policies on discounted MDP under quantum-accessible environments. However, they assumed the ability to prepare any quantum state $\ket{s}$ for all $s \in \caS$, which enables them to use $\bar{\caP}$ and $\bar{\caR}$ as generative models to query any state and action \citep{sidford2018near, kearns1998finite}.  
In online RL, however, the agent cannot access unexplored states. As a consequence, the agent has to actively explore the unknown environment to learn the high-rewarded area of the state space. 
This is known as the exploration challenge, which is ubiquitous in the literature of online RL \citep{auer2008near, azar2017minimax, jin2018q}. 
In the quantum exploration problem, we cannot prepare arbitrary $\ket{s}$, and we need to resolve the exploration challenge.

\section{Warmup: Results for Tabular MDPs}
\label{sec:tabuar_mdp}

As an initial study of online quantum RL, we focus on the tabular setting where the state and action space are finite and of small sizes, where we assume $|\caS| = S, |\caA| = A$. 
For tabular RL, we propose Quantum UCRL that learns the optimal policy given quantum access to an episodic MDP. Its key ingredients are summarized as follows.

\begin{algorithm}[t]
    \caption{Quantum UCRL}
    \label{alg:ucrl2_q}
 \begin{algorithmic}[1]
    \STATE {\bfseries Input:} failure probability $\delta$.
    \STATE Initialize for each $\forall (s, a,h) \in \caS \times \caA \times [H]$: (i) the counter $n_h(s, a) := 0$ and tag { $l_h(s, a) := 0$}; (ii) the empirical model { $\hat{\caP}_h(\cdot \mid s, a) := \mathrm{Unif}(\caS)$}; and (iii) the set of quantum samples { $\caD_h(s, a) := \emptyset$}.
    
    \STATE Initialize the exploration policy $\pi^1$ as a uniform policy.
    \FOR {phase $k = 1, 2, ..., \lceil T / H \rceil$}
    \FOR {step $h = 1, 2, ..., H$}
       \STATE Call $s_{h}^k := \mathrm{CSQA}(\pi^k, {h})$ (\alg{csqa}), define $a_{h}^k := \pi^k_{h}(s_{h}^k)$.
    \STATE { Define $\ket{\bar{\varphi}_{k, {h}+1}}:=\bar{\caP}_{h} \ket{s_{h}^k, a_{h}^k} \ket{0}$ and $\ket{\bar{\varphi}_{k, {h}+1}^{-1}}:=\bar{\caP}_{h}^{-1} \ket{s_{h}^k, a_{h}^k} \ket{0}$.}
    \STATE Add { $\{\ket{\bar{\varphi}_{k,h+1}}, \ket{\bar{\varphi}^{-1}_{k,{h}+1}}\}$} to $\caD_{h}(s_{h}^k, a_{h}^k)$.
    \STATE Add the counter $n_{h}(s_{h}^k, a_{h}^k)$ by 1.
    \IF{$n_{h}(s_{h}^k, a_{h}^k) = 2^{l_{h}(s_{h}^k, a_{h}^k)}$}
    \STATE Update $\hat{\caP}_{h}(\cdot \mid s_{h}^k, a_{h}^k)$ by calling quantum subroutine (\lem{quantum_amplitude_est}) with $\caD_{h}(s_{h}^k, a_{h}^k)$.
    \label{alg: qucrl_proboracle}
    \STATE Add the tag $l_{h}(s_{h}^k, a_{h}^k)$ by 1.
    \STATE Set $\caD_{h}(s_{h}^k, a_{h}^k) = \emptyset$.
    \ENDIF
    \ENDFOR
    \STATE Set $V_{H+1}(s) = 0, \forall s \in \caS$.
    \FOR{$h = H, H - 1, ..., 1$}
    \STATE Set bonus term $b_h(s, a) := \tilde{\cO}(HS / n_h(s, a))$.
    \STATE Update $Q, V$ by (\ref{eqn:ucbvi_qfunction}) and (\ref{eqn:ucbvi_vfunction}).
    \STATE Set $\pi^{k+1}_h(s) := \argmax_a Q_h(s, a)$.
    \ENDFOR
    \ENDFOR
 \end{algorithmic}
 \end{algorithm}

\paragraph{Data collection scheme.}
Unlike classical RL that a complete trajectory $(s_1, a_1, s_2, a_2, ..., s_H, a_H)$ is revealed in a single episode, we can only collect quantum states that cannot be seen directly 
in quantum-accessible environments. Therefore, we cannot build the estimators of $\cP_h(\cdot \mid s_h, a_h)$ with $s_{h+1}$ as in classical RL algorithms \citep{auer2008near, azar2017minimax} because $s_h, a_h, s_{h+1}$ are not directly observable without measurements. 

To this end, we divide $T$ episodes into different phases, where one phase consists of $H$ consecutive episodes. We use a fixed policy $\pi^k$ during the $k$-th phase to collect quantum samples for each $h \in [H]$. To construct an estimator of $\cP_h(\cdot \mid s_h, a_h)$ for any $(s_h, a_h)$, we need to first obtain a classical sample $(s_h, a_h)$, then query $\bar{\caP}_h$ on $\ket{s_h, a_h}$
to get a quantum sample of $\caP_h(\cdot \mid s_h, a_h)$. Quantum subroutines enable us to estimate $\caP_h(\cdot \mid s_h, a_h)$ with quantum samples more efficiently. Fortunately, we can accomplish this classical sampling by the CSQA subroutine, and query $\bar{\caP}_h$ once to acquire a quantum sample.

\paragraph{Lazy updating via doubling trick.}
Fix a state-action pair $(s, a)$ and step $h$, it requires $\tilde{\cO}(S/\epsilon)$ quantum samples of $\caP_h(\cdot \mid s, a)$ (i.e., $\tilde{\cO}(S/\epsilon)$ calls to $\bar{\caP}_h$ or its inverse on $\ket{s, a}$) to form an $\epsilon$-close estimator in terms of $\ell_1$ distance (\lem{quantum_amplitude_est}). However, the quantum estimation subroutine requires to do a measurement in the end, which causes all the quantum samples to collapse. As a result, these quantum samples are not reusable in the future estimation, in contrast to the classical setting where each sample can be reused to construct future estimators.

This phenomenon has been observed by \citet{wan2022quantum} in quantum bandits, where they also need to design a lazy updating scheme to estimate the rewards of each arm. In the quantum bandit problems, the agent is able to constantly pull an arm and collect quantum samples to estimate it, while it is not possible to constantly query on the same $(s, a)$ in MDPs. Therefore, we have to design a more complicated lazy updating scheme.

We use a doubling trick to resolve this issue. Define the tag function by $l_h(\cdot, \cdot): \caS \times \caA \to \mathbb{N}$, which are initially 0. We only reconstruct the estimator $\hat{\caP}_h(\cdot \mid s, a)$ using quantum multi-dimensional amplitude estimation (\lem{quantum_amplitude_est}) as long as the number of quantum samples $n_h(s, a)$ reaches $2^{l_h(s, a)}$. Then we add $l_h(\cdot, \cdot)$ by 1 to double the length of this procedure. In this way, we ensure $n_h(s, a)/2 \leq \tilde{n}_h(s, a) \leq n_h(s, a)$, where $\tilde{n}_h(s, a)$ is the number of quantum samples actually used to build the estimator. { More details on the counter updating is deferred to Remark~\ref{remark:counter}.}

\paragraph{Optimistic planning.}
Thanks to the quantum subroutines and doubling trick, we can construct an estimator $\hat{\caP}_h(\cdot \mid s, a)$ for any $(s,a)$ such that with high probability
\begin{align}
\label{eqn:estimator_tabular}
\Big\|\hat{\caP}_h(\cdot \mid s, a) - \caP_h(\cdot \mid s, a)\Big\|_1 \leq \tilde{\caO}\Big(\frac{S}{n_h(s, a)}\Big).
\end{align}
This greatly improves the $\tilde{\cO}(1/\sqrt{n_h(s,a)})$ rate of empirical estimation in classical RL \citep{jaksch2010near}. See  \append{appendix_qproboracle_tabular} for the detailed implementation.

\looseness=-1 At the end of each phase, we update the optimistic value functions by optimistic value iteration \citep{azar2017minimax} with bonus function $b_h(s, a) := \tilde{\cO}(HS / n_h(s, a))$ (see the formal definition in (\ref{eqn:bonus_func_tabular})):
\begin{align}
\label{eqn:ucbvi_qfunction}
Q_h(s, a) &= \min\{R_h(s, a) + \hat{\dbP}_hV_{h+1}(s, a) + b_h(s, a),H\}, \\
\label{eqn:ucbvi_vfunction}
V_h(s) &= \max_a Q_h(s, a),
\end{align}
where $\hat{\mathbb{P}}_h$ is the operator defined by
$(\hat{\PP}_h V)(s, a) = \EE[ V(s') \given s' \sim \hat{\cP}_h(\cdot \given s, a)].$
The exploration policy $\pi^{k+1}$ for the next phase is fixed as the greedy policy w.r.t. $Q$.

The theoretical guarantee of \alg{ucrl2_q} is given below.

\begin{theorem}
\label{thm:regret_tabular}
With probability at least $1 - \delta$, the regret of Quantum UCRL (\alg{ucrl2_q}) is at most
\begin{align*}
\Reg(T) = \caO\left(S^2 A H^3 \log(T) \log(S^2AH \log(T) / \delta)\right).
\end{align*}
\end{theorem}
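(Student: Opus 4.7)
My plan is to port the UCBVI regret analysis of \citet{azar2017minimax} to the quantum setting, exploiting three features of \alg{ucrl2_q}: the improved $\ell_1$ model error $\|\hat{\cP}_h-\cP_h\|_1 \le \tilde{\cO}(S/n_h(s,a))$ from \lem{quantum_amplitude_est}, the doubling schedule which guarantees that any active estimator was built from $\tilde n_h\in[n_h/2,n_h]$ quantum samples, and the phase structure that commits each $\pi^k$ for $H$ episodes. The $\log T$ factor replaces the classical $\sqrt T$ precisely because the bonus $b_h^k=\tilde{\cO}(HS/n_h)$ contributes a harmonic sum $\sum 1/j=\cO(\log T)$ instead of $\sum 1/\sqrt j=\cO(\sqrt T)$.

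\textbf{Good event and optimism.} I first fix a high-probability event on which $\|\hat{\cP}_h(\cdot\,|\,s,a)-\cP_h(\cdot\,|\,s,a)\|_1\le\tilde{\cO}(S/n_h(s,a))$ holds for every $(s,a,h)$ and every refresh. Instantiating \lem{quantum_amplitude_est} with per-call failure probability $\delta/(SAH\lceil\log T\rceil)$ and union-bounding over the $SAH$ triples and the at most $\cO(\log T)$ refreshes per triple yields this event with probability $\ge 1-\delta$, accounting for the $\log(S^2AH\log(T)/\delta)$ factor in the theorem. Backward induction on $h$ then gives optimism $V_h^k(s)\ge V_h^*(s)$, since the bonus dominates the Bellman error $|(\hat{\dbP}_h^k-\dbP_h)V_{h+1}^*(s,a)|\le H\|\hat{\cP}_h^k-\cP_h\|_1\le b_h^k(s,a)$. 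Optimism together with the standard Bellman expansion gives
\[
V_1^*(s_1)-V_1^{\pi^k}(s_1)\;\le\;V_1^k(s_1)-V_1^{\pi^k}(s_1)\;\le\;\sum_{h=1}^{H}\dbE_{\pi^k}\bigl[2\,b_h^k(s_h,a_h)\bigr],
\]
absorbing the $(\hat{\dbP}_h^k-\dbP_h)V_{h+1}^k$ term into a second copy of the bonus via the same triangle inequality.

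\textbf{Summing over phases.} Because each phase replays $\pi^k$ for $H$ episodes, $\Reg(T)\le 2H\sum_{k=1}^{K}\sum_h\dbE_{\pi^k}[b_h^k(s_h,a_h)]$ with $K=\lceil T/H\rceil$. To avoid the $\sqrt T$ cost that a naive Azuma step would incur when converting expected to empirical bonuses, I work directly with expected visit counts $\bar N_h^k(s,a):=\sum_{k'<k}d_h^{\pi^{k'}}(s,a)$. Because the CSQA-produced $(s_h^{k'},a_h^{k'})$ are, conditional on the classical history, independent draws from $d_h^{\pi^{k'}}$, a Bernstein inequality yields $n_h^k(s,a)\ge \bar N_h^k(s,a)/2$ whenever $\bar N_h^k\ge\tilde{\cO}(\log T)$ (the complementary warm-up regime contributes only lower-order terms), so $b_h^k(s,a)\le\tilde{\cO}(HS/\bar N_h^k(s,a))$. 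The harmonic pigeonhole
\[
\sum_{k=1}^{K}\frac{d_h^{\pi^k}(s,a)}{\bar N_h^k(s,a)}\;\le\;\cO\bigl(\log \bar N_h^K(s,a)\bigr)\;=\;\cO(\log T)
\]
then yields $\sum_{k}\sum_{s,a}d_h^{\pi^k}(s,a)\,b_h^k(s,a)\le \tilde{\cO}(HS^2A\log T)$ per step $h$; multiplying by the $H$ steps and the outer $H$ from phase length gives $\tilde{\cO}(H^3S^2A\log T)$, matching the statement.

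\textbf{Main obstacle.} The technically delicate point is coupling the single-use quantum oracles to the classical independence I invoke above: I must verify that the destructive CSQA measurement yielding $s_h^k$, the determination of $a_h^k=\pi_h^k(s_h^k)$, and the single fresh quantum sample of $\cP_h(\cdot\,|\,s_h^k,a_h^k)$ fed to the estimation subroutine are, conditional on the classical history through phase $k-1$, jointly distributed exactly as independent classical samples from $d_h^{\pi^k}$ and $\cP_h$. This requires carefully tracking the single-use constraint on $\Pi_h$ and $\bar{\caP}_h$ within each episode of each phase, in contrast to the classical UCBVI setting where this independence is immediate. Once it is established, the Bernstein bound relating $n_h^k$ to $\bar N_h^k$, the optimism calculation, and the pigeonhole step are all routine.
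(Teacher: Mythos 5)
Your proposal is correct and follows essentially the same route as the paper's proof: the same high-probability event for the quantum amplitude-estimation error (union-bounded over the $SAH$ triples and the $\cO(\log T)$ doubling refreshes), the same optimism-plus-Bellman-expansion bound by expected bonuses under the occupancy measures, and the same split into a low-visit warm-up regime versus well-visited pairs followed by a harmonic $\sum 1/n$ summation yielding the $\log T$ factor. The only difference is presentational: where you invoke Bernstein for $n_h^k \gtrsim \bar N_h^k$ and prove the pigeonhole sum directly, the paper packages these as its event $\caE^2$, the ``good set'' lemma, and a cited lemma of \citet{zanette2019tighter}, and it treats the CSQA-sampling coupling you flag implicitly via the subroutine's correctness.
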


\begin{proof}
    See \append{appendix_tabular} for a detailed proof.
\end{proof}

Our results show that it is possible to design novel algorithms with only $\cO(\poly(\log T))$ regret with the help of quantum subroutines. This greatly improves on the classical setting where the regret bound of any classical algorithm must be at least $\Omega(\sqrt{T})$ \citep{auer2008near, jin2018q}. We conjecture that the dependence of $S$ and $H$ in \thm{regret_tabular} can be improved. Specifically, to improve the dependency on $S$, it may be necessary to design provably efficient model-free algorithms since the regret $\cO(S^2A \log(T))$ seems inevitable for model-based algorithms \citep{azar2017minimax,zhang2021isreinforcement}. On the other hand, to improve the dependency on $H$, one may need to utilize the technique of variance reduction \citep{azar2017minimax}.

\section{Results for Linear Mixture MDPs} \label{sec:alg:linear}

The result for tabular MDPs is a simple demonstration of the effectiveness of quantum RL. Our main setting is quantum RL with linear function approximation. In specific, we customize a novel quantum RL algorithm for linear mixture MDP \citep{modi2020sample,ayoub2020model,cai2020provably}, followed by theoretical guarantees. 

We present the Quantum UCRL-VTR algorithm in \alg{quantum:vtr}. The learning process consists of $T$ episodes, which are divided into $K$ phases. At each phase, the algorithm has four ingredients: (i) model estimation; (ii) optimistic planning; (iii) feature estimation; and (iv) regression targets estimation. We present the details of the $k$-th phase below.

\paragraph{Model estimation.} 
At the beginning of the $k$-th phase, the agent has access to the estimators obtained in previous $k-1$ phases, including

\begin{itemize}[leftmargin=*]
    \item Estimated value functions $\{V_h^\tau\}_{(\tau, h) \in [k - 1] \times [H]}$;
    \item Estimated features $\{\hat{\phi}_h^\tau\}_{(\tau, h) \in [k - 1] \times [H]}$, which are estimators of 
    $$\big\{\phi_h^\tau := \EE_{(s_h, a_h) \sim \pi^\tau} [\phi_{V_{h+1}^\tau}(s_h, a_h)] \big\}_{(\tau, h) \in [k - 1] \times [H]}.$$
    \item Regression targets $\{y_h^\tau\}_{(\tau, h) \in [k - 1] \times [H]}$, which are estimators of $$\left\{ \EE_{(s_h, a_h) \sim \pi^\tau} [\PP_h V_{h+1}^\tau(s_h, a_h)]\right\}_{(\tau, h) \in [k - 1] \times [H]}.$$
\end{itemize}
Note that, for any $h \in [H]$,  
\# \label{eq:51}
\EE_{\pi^\tau} [\PP_h V_{h+1}^\tau(s_h, a_h)] &= \EE_{\pi^\tau} \Big[ \int_{s'} V_{h+1}^\tau \cP_h(s' \mid s_h, a_h)  \mathrm{d} s' \Big] \notag \\
& = \EE_{\pi^\tau} \Big[ \int_{s'} V_{h+1}^\tau \psi(s_h, a_h, s')^\top \theta_h \mathrm{d} s' \Big] \notag \\
& = (\phi_h^\tau)^\top \theta_h,
\#
where the first equality uses the definition of the operator $\PP_h$ defined in \eqref{eq:def:PP}, the second equality follows from the definition of linear mixture MDP in \defn{linear:mixture}, and the last equality is obtained by the definition that  $\phi_h^\tau = \EE_{\pi^\tau} [\phi_{V_{h+1}^\tau}(s_h, a_h)]$. Here $\EE_{\pi^\tau}$ is taken respect to $(s_h, a_h)$ and we omit $(s_h, a_h)$ for simplicity. Inspired by~\eqref{eq:51}, for any $h \in [H]$, we solve the following weighted ridge regression problem to estimate $\theta_h$
\# \label{eq:def:regression}
\bar{\theta}_h^k \leftarrow \argmin_{\theta} \sum_{\tau = 1}^{k - 1} \frac{\big( (\hat{\phi}_h^\tau)^\top \theta - y_h^\tau \big)^2}{w_\tau^2} + \lambda \left\|\theta\right\|^2_2,
\#
where the weight $w_\tau$ is calculated in previous $k-1$ episodes and we will specify its choice in Line~\ref{line:w} of \alg{quantum:vtr}. $\lambda$ is a regularization parameter that will be specified in \thm{linear:mixture}. 
The solution of \eqref{eq:def:regression} takes the form
 \begin{equation}
     \begin{aligned} \label{eq:def:bar:theta}
         & \bar{\theta}_h^k = (\Lambda_h^k)^{-1} \Big( \sum_{\tau = 1}^{k-1} \frac{\hat{\phi}_h^\tau \cdot y_h^\tau }{w_\tau^2} \Big), \\
         & \text{where } \Lambda_h^k = \sum_{\tau = 1}^{k - 1} \frac{\hat{\phi}_h^\tau (\hat{\phi}_h^\tau)^\top }{w_\tau^2} + \lambda I_d.
     \end{aligned}
 \end{equation}

We remark that the work \citep{zhou2021nearly} in classical RL also adopted the weighted ridge regression to estimate models. The $w_\tau^2$ in \citet{zhou2021nearly} is the estimated variance, while $w_\tau^2$ here is the estimation uncertainty measured by matrix weighted norm (cf. Line~\ref{line:w} of \alg{quantum:vtr}). Besides, \citet{wan2022quantum} applied a similar weighted ridge regression to quantum linear bandits. However, we have an extra challenge in determining $w_{\tau}$ to incorporate the quantum speed-up: the feature $\phi^{\tau}_h$ is unknown and needs to be estimated from data. To this end, we propose the ``feature estimation'' part, which is completely new and essential for RL. See \rmk{feature:estimation} for details. 

\begin{algorithm}[t]
	\caption{Quantum UCRL-VTR}
		\begin{algorithmic}[1] \label{alg:quantum:vtr}
      \FOR{phase $k = 1, \cdots, K$}
      \STATE Calculate $\{\bar{\theta}_h^k\}_{h \in [H]}$ and $\{\Lambda_h^k\}_{h \in [H]}$ as \eqref{eq:def:bar:theta}.
      \STATE Construct the confidence set $\cC^k = \{\cC_h^k\}_{h \in [H]}$ as~\eqref{eq:def:confidence:set}.
      \STATE $(\{Q_h^k\}_{h \in [H]}, \{V_h^k\}_{h \in [H]}, \{\pi_h^k\}_{h \in [H]}) \leftarrow$ Optimistic Planning($\cC^k$) (\alg{planning}).
      \STATE $\{\hat{\phi}_h^k\}_{h \in [H]} \leftarrow \text{Estimate Feature}(\{\Lambda_h^k\}_{h \in [H]})$ (\alg{estimate:feature}).
      \STATE Set $w_k \leftarrow \max_{h \in [H]} \|\hat{\phi}_h^k\|_{(\Lambda_h^k)^{-1}}$. \label{line:w} 
      \STATE $\{y_h^k\}_{h \in [H]} \leftarrow \text{Estimate Regression Target}(\pi^k, V^k, w_k)$ (\alg{estimate:target}). 
      \ENDFOR
      \end{algorithmic}
\end{algorithm}

\paragraph{Optimistic planning.} 
Given the estimators $\{\bar{\theta}_h^k\}_{h \in [H]}$, we construct the confidence set $\cC^k = \{\cC_h^k\}_{h \in [H]}$ for $\{\theta_h\}_{h \in [H]}$, where
\# \label{eq:def:confidence:set}
\cC_h^k = \left\{\theta: \|\theta - \bar{\theta}_h^k\|_{\Lambda_h^k} \le \beta_k\right\},
\#
and $\beta_k \ge 0$ is the radius of the confidence set specified later. We will prove that $\theta_h \in \cC_h^k$ for all $h \in [H]$ with high probability. Based on this confidence set, we can perform the optimistic planning (\alg{planning}).

\begin{algorithm}[t]
	\caption{Optimistic Planning}
		\begin{algorithmic}[1] \label{alg:planning}
      \STATE \textbf{Input:} Confidence set $\cC = \{\cC_h\}_{h \in [H]}$. 
     \STATE Calculate $\{\hat{\theta}_h\}_{h \in [H]}$ by solving the following problem:
     \$
     &\argmax_{\{\theta'_h\}_{h \in [H]}} V'_1(s_1)  \\ \text { s.t. } &\left\{\begin{array}{l} Q'_h(\cdot, \cdot) = R_h(\cdot, \cdot) + \phi_{V'_{h+1}}(\cdot, \cdot)^\top \theta'_h , \, \forall h \in [H] , \\ V'_h(\cdot) = \max_{a \in \cA} Q'_h(\cdot, a) , \, \forall h \in [H], \\  V'_{H+1}(\cdot) = 0,  \\ \theta'_h \in \cC, \, \forall h \in [H] .
     \end{array}\right.
     \$
     \STATE $V_{H+1}(\cdot) = 0$.
     \FOR{$h = H, \cdots, 1$}
     \STATE $Q_h (\cdot, \cdot) = R_h(\cdot, \cdot) + \phi_{V_{h+1}}(\cdot, \cdot)^\top \hat{\theta}_h$.
     \STATE $V_h (\cdot) = \max_{a \in \cA} Q_h(\cdot, a)$.
     \STATE $\pi_h(\cdot \mid \cdot) = \argmax_{\pi} \la Q_h(\cdot, \cdot), \pi_h(\cdot \mid \cdot) \ra_{\cA}$.
     \ENDFOR
     \STATE \textbf{Output:}  $(\{Q_h\}_{h \in [H]}, \{V_h\}_{h \in [H]}, \{\pi_h\}_{h \in [H]})$ .
      \end{algorithmic}
\end{algorithm}

\begin{algorithm}[t]
	\caption{Estimate Feature}
		\begin{algorithmic}[1] \label{alg:estimate:feature}
      \STATE \textbf{Input:} Positive definite matrices $\{\Lambda_h\}_{h \in [H]}$ .
      \STATE Initialize: $m = 1$, $\hat{\phi}_{h,0} = \hat{\phi}_{h, 1} = \mathbf{0}$, $\forall h \in [H]$.
      \WHILE{$\|\hat{\phi}_{h, m - 1}\|_{(\Lambda_h)^{-1}} < 2^{2-m}$ for all $h \in [H]$}
       \STATE $m \leftarrow m + 1$.
      \FOR{$h = 1, \cdots, H$}
      \STATE Calculate $\hat{\phi}_{h, m}$ by quantum multivariate mean estimation subroutine (\lem{quantum:mean}) up to error $2^{-m}$.
      \ENDFOR
      \ENDWHILE
      \end{algorithmic}
\end{algorithm}

\begin{algorithm}[h]
	\caption{Estimate Regression Target}
      \begin{algorithmic}[1] \label{alg:estimate:target}
      \STATE \textbf{Input:} Policy $\pi$, value functions $\{V_{h}\}_{h \in [H]}$, and parameter $w$.
      \FOR{$h = 1, \cdots, H$}
      \STATE Calculate $y_h$, the estimator of $\EE_{\pi}[\PP_h V_{h+1}(s_h, a_h) ]$, by quantum mean estimation subroutine (\lem{quantum:mean}) up to error $w$.
      \ENDFOR
      \end{algorithmic}
\end{algorithm}

\paragraph{Feature estimation.} Recall that we use estimated features $\{\hat{\phi}_h^\tau\}_{(\tau, h) \in [k - 1] \times [H]}$ to perform the weighted ridge regression in \eqref{eq:def:regression} since $\{{\phi}_h^\tau\}_{(\tau, h) \in [k - 1] \times [H]}$ are unknown. In the $k$-th phase, we need to estimate features $\{\phi_h^k\}_{h \in [H]}$ by quantum tools.
{For the sake of theoretical analysis (cf.~\append{linear:mixture}), the ideal estimators $\{\hat{\phi}_h^k\}_{h \in [H]}$ should satisfy
\# \label{eq:est:goal}
\|\hat{\phi}_h^k - \phi_h^k\|_2 \le \max_{h \in [H]} \|\hat{\phi}_h^k\|_{(\Lambda_h^k)^{-1}}, 
\#
for all $ h \in [H]$. Even for the classical setting, this problem still seems challenging since the accuracy in the right hand side of \eqref{eq:est:goal} depends on the estimators $\{\hat{\phi}_h^k\}_{h \in [H]}$ themselves. Meanwhile, we hope to achieve acceleration in estimating $\{\phi_h^k\}_{h \in [H]}$ with the help of quantum tools, which poses another challenge. To this end, we propose a novel feature estimation process, which leverages a novel combination of the binary search and quantum mean estimation oracles. See~\alg{estimate:feature} and \append{appendix_qproboracle_feature} for details.}

{
\begin{remark} \label{remark:feature:estimation}
Since linear bandits do not involve unknown transition kernels and the feature is known to the learner \citep{wan2022quantum}, this challenge is unique to linear mixture MDPs. Meanwhile, in classical RL \citep{ayoub2020model}, there is no need to estimate the feature due to martingale analysis, which typically results in only a $\sqrt{T}$ regret, without a quantum computing speedup counterpart. Finally, we would like to emphasize that the technical challenge elaborated in \eqref{eq:est:goal} has not appeared in previous literature, and the corresponding algorithm design (\alg{estimate:feature}) and analysis (e.g., \lem{estimate:feature}) are entirely new. More elaborations of our novelties are deferred to \append{appendix:novelty}.

\end{remark}
}

\paragraph{Regression targets estimation.}
At the end of $k$-th phase, we use quantum multivariate mean estimation oracle in \lem{quantum:mean} to calculate regression targets $\{y_h^k\}_{h \in [H]}$. We provide the pseudocode in \alg{estimate:target} and defer more details to \append{appendix_qproboracle_target}.



The theoretical guarantee for \alg{quantum:vtr} is given below.

\begin{theorem} \label{thm:linear:mixture}
     Let $\lambda = 1$ in \eqref{eq:def:bar:theta} and $\beta_k = 1 + 2\sqrt{dk}$ in~\eqref{eq:def:confidence:set}. Fix $\delta > 0$. Then with probability at least $1 - \delta$, the regret bound of \alg{quantum:vtr} satisfies
     \$
      \mathrm{Regret}(T) =  \cO\Big( d^{5/2} H^{9/2} \log^{3/2}\big(1 + \frac{T^3}{d}\big) \cdot \iota \Big),
     \$
     where $\iota = \cO( \log( dH \log(1 + T^3/d) / \delta) )$. 
\end{theorem}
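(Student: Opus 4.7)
The plan is to adapt the classical UCRL-VTR regret analysis to the phased quantum setting where each pass through \alg{quantum:vtr} produces a single policy $\pi^k$ that is used to gather all data for phase $k$. I would first condition on the event
\[
\cE = \bigl\{\theta_h \in \cC_h^k \text{ for all } (k,h) \in [K] \times [H]\bigr\},
\]
under which Optimistic Planning (\alg{planning}) certifies $V_1^k(s_1) \ge V_1^*(s_1)$, and the per-phase gap reduces via the Bellman equation \eqref{eq:bellman} and the linear-mixture structure of \defn{linear:mixture} to
\[
V_1^k(s_1) - V_1^{\pi^k}(s_1) \;=\; \sum_{h=1}^H \EE_{\pi^k}\!\bigl[\phi_{V_{h+1}^k}(s_h,a_h)\bigr]^\top (\hat\theta_h^k - \theta_h) \;=\; \sum_{h=1}^H (\hat\theta_h^k - \theta_h)^\top \phi_h^k,
\]
which a Cauchy--Schwarz bound in the $\Lambda_h^k$-norm controls by $2\beta_k \sum_{h} \|\phi_h^k\|_{(\Lambda_h^k)^{-1}}$.

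\paragraph{Concentration of the ridge estimator.} To verify $\cE$ I would run a weighted self-normalized concentration argument (in the style of Abbasi-Yadkori--P\'al--Szepesv\'ari) on the regression \eqref{eq:def:regression}. The residual $y_h^\tau - (\hat\phi_h^\tau)^\top \theta_h$ decomposes into two deterministic error pieces: (i) the regression-target error $|y_h^\tau - (\phi_h^\tau)^\top \theta_h|\le w_\tau$, delivered by \alg{estimate:target} combined with \lem{quantum:mean}; and (ii) the feature-replacement error $|(\phi_h^\tau - \hat\phi_h^\tau)^\top \theta_h| \le \|\phi_h^\tau - \hat\phi_h^\tau\|_2 \le w_\tau$, delivered by the termination guarantee of \alg{estimate:feature}. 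Feeding these bounds into the self-normalized inequality with weights $w_\tau^{-2}$ yields $\|\bar\theta_h^k - \theta_h\|_{\Lambda_h^k} \le 1 + 2\sqrt{dk} = \beta_k$, and a union bound over $(k,h)$ establishes $\dbP(\cE) \ge 1-\delta$.

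\paragraph{Correctness of feature estimation.} The novel quantum ingredient is \alg{estimate:feature}; its analysis is the delicate step. I would show inductively that each call to \lem{quantum:mean} at accuracy $2^{-m}$ enforces $\|\hat\phi_{h,m} - \phi_h^k\|_2 \le 2^{-m}$. Let $m^\star$ denote the iteration at which the loop exits; the exit condition gives some $h$ with $\|\hat\phi_{h,m^\star-1}\|_{(\Lambda_h)^{-1}} \ge 2^{2-m^\star}$, hence $w_k \ge 2^{2-m^\star}$. Combined with the inductive bound, this yields the self-consistent estimation error $\|\hat\phi_h^k - \phi_h^k\|_2 \le w_k/4$ for every $h$, which is exactly what \eqref{eq:est:goal} demands. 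Since $\lambda = 1$, the same bound transfers to the $(\Lambda_h^k)^{-1}$-norm, so $\|\phi_h^k\|_{(\Lambda_h^k)^{-1}}\le (5/4)w_k$ is the effective per-step contribution to the regret. The total quantum query cost of the phase is dominated by the final binary-search iteration at accuracy $\Theta(w_k)$, giving $\tilde\cO(H\sqrt{d}/w_k)$ queries (hence episodes) in phase $k$.

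\paragraph{Phase counting, final bound, and main obstacle.} The number of phases $K$ is controlled by the elliptic potential lemma applied to the weighted design $\hat\phi_h^\tau / w_\tau$ with respect to $\Lambda_h^\tau$: at the argmax coordinate $h^\star_\tau$ one has $\|\hat\phi_{h^\star_\tau}^\tau / w_\tau\|_{(\Lambda_{h^\star_\tau}^\tau)^{-1}} = 1$, so summing over $h$ gives $K \le \sum_{h,\tau}\|\hat\phi_h^\tau / w_\tau\|_{(\Lambda_h^\tau)^{-1}}^2 = \cO(dH\log(1+T^3/d))$. Combining the per-phase regret bound $\cO(\beta_k H w_k)$ with $\cO(H\sqrt{d}/w_k)$ episodes per phase, the cancellation of $w_k$ leaves
\[
\Reg(T) \;\lesssim\; \sum_{k=1}^K \frac{H\sqrt{d}}{w_k}\cdot \beta_k\cdot H w_k \;=\; H^2\sqrt{d}\sum_{k=1}^K \beta_k \;\le\; H^2\sqrt{d}\,K\,\beta_K,
\]
and substituting $K = \cO(dH\log(1+T^3/d))$ together with $\beta_K = 1 + 2\sqrt{dK}$ yields, after bookkeeping of the $\log$ and $\iota$ factors, the advertised $\cO(d^{5/2}H^{9/2}\log^{3/2}(1+T^3/d)\cdot\iota)$ bound. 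The principal obstacle throughout is the circular dependence between $w_k$ and the feature estimates $\hat\phi_h^k$: the mean-estimation accuracy must match $w_k$, yet $w_k$ is defined through $\hat\phi_h^k$. Showing that the doubling binary search in \alg{estimate:feature} simultaneously (i) terminates in $\cO(\log(dT))$ iterations, (ii) delivers the self-consistent error bound \eqref{eq:est:goal}, and (iii) keeps the total quantum cost of phase $k$ matched to $\tilde\cO(H\sqrt{d}/w_k)$ is where the bulk of the new work lies and is what enables the entire regret decomposition to close.
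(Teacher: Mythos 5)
Your proposal follows essentially the same route as the paper: optimism from the confidence sets with $\beta_k = 1+2\sqrt{dk}$ built on the two per-phase error bounds $|y_h^\tau-(\phi_h^\tau)^\top\theta_h|\le w_\tau$ and $\|\hat{\phi}_h^\tau-\phi_h^\tau\|_2\le w_\tau$, the self-consistent feature-estimation guarantee \eqref{eq:est:goal} from the doubling search, a per-phase accounting that multiplies the gap $\cO(\beta_k H w_k)$ by the number of episodes spent in the phase so that $w_k$ cancels, and a phase count $K=\cO(dH\log(1+T^3/d))$ (your elliptic-potential argument is equivalent to the paper's determinant-doubling argument, since the argmax coordinate has normalized norm exactly one). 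Three bookkeeping points need care, though none breaks the argument. First, the confidence-set step is not a self-normalized martingale bound in the Abbasi-Yadkori sense: the quantum estimation errors are bounded but not conditionally zero-mean, so that machinery does not apply; what is actually needed (and what the paper does, and what your stated numbers implicitly reflect) is the deterministic Cauchy--Schwarz/trace bound $\|\sum_\tau \hat{\phi}_h^\tau\varepsilon_\tau/w_\tau^2\|_{(\Lambda_h^k)^{-1}}\le 2\sqrt{k}\cdot\sqrt{d}$ with $|\varepsilon_\tau|\le 2w_\tau$, the probability $1-\delta$ coming solely from the success of the quantum subroutines. Second, in the feature-estimation analysis your claim $w_k\ge 2^{2-m^\star}$ is not justified: the exit condition constrains $\hat{\phi}_{h,m^\star-1}$, while $w_k$ is computed from the final estimate $\hat{\phi}_{h,m^\star}$, whose $(\Lambda_h^k)^{-1}$-norm can only be lower-bounded by $2^{-m^\star}$ after a triangle-inequality step; this still yields the needed $\|\hat{\phi}_h^k-\phi_h^k\|_2\le w_k$ (as in \lem{estimate:feature}), just not your stronger $w_k/4$. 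Third, the episode cost of feature estimation in phase $k$ is $\cO(\sqrt{d}H^2\iota/w_k)$, not $\tilde{\cO}(\sqrt{d}H/w_k)$: there is one factor $H$ from the bound $C=H$ in \lem{quantum:mean} and another from estimating at all $H$ steps; with your count the final sum would give $H^{7/2}$, and it is only with the corrected count (plus the subdominant $\cO(H^2\iota/w_k)$ regression-target episodes, which your final sum omits) that the stated $\cO(d^{5/2}H^{9/2}\log^{3/2}(1+T^3/d)\cdot\iota)$ bound emerges.
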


\begin{proof}
    We sketch the main proof idea here, and the detailed proof is deferred to \append{linear:mixture}.  Firstly, we show that the phase number $K$ is bounded by $\tilde{\cO}(dH)$ (\lem{phase:size}). Additionally, we establish that the $k$-th phase contains at most $\tilde{\cO}(\sqrt{d}H^2/\max_{h\in[H]} \|\hat{\phi}_h^k\|_{(\Lambda_h^k)^{-1}})$ episodes, as detailed in \lem{estimate:feature} and \lem{estimate:target}. Through a novel regret decomposition analysis (\lem{confidence:set} and the proof in \append{pf:linear}), we establish that each episode in the $k$-th phase incurs at most $\tilde{\cO}(H \beta_k \cdot \max_{h\in[H]} \|\hat{\phi}_h^k\|_{(\Lambda_h^k)^{-1}})$ error. By combining these results, we derive our final regret bound. Notably, our proof diagram, especially the feature estimation analysis, is completely new.
\end{proof}

We have established an $\cO(\mathrm{poly}(d, H, \log T))$ regret in \thm{linear:mixture} as desired. This logarithmic regret breaks the $\Omega(\sqrt{T})$ barrier in classical RL \citep{zhou2021nearly}. 

\section{Related Works}
\label{append:related}

Exploration is the core problem in online RL. In the literature of classical RL, there is a long line of works designing no-regret algorithms for tabular RL \citep{jaksch2010near,azar2017minimax,jin2018q,zanette2019tighter,zhang2021isreinforcement,wu2022nearly,zhang2022horizon}, RL with linear function approximation \citep{yang2019sample,ayoub2020model,jin2020provably,cai2020provably,yang2020reinforcement,modi2020sample,zhou2021nearly}, and RL with general function approximation \citep{jiang2017contextual,sun2019model,wang2020reinforcement,du2021bilinear,ishfaq2021randomized,jin2021bellman,foster2021statistical,zhong2022posterior,chen2022general}. Our work is mostly related to \citet{jaksch2010near} and \citet{ayoub2020model}. Specifically, \citet{jaksch2010near} proposed the UCRL2 algorithm for tabular RL and established an $\tilde{\cO}(\mathrm{poly}(S, A, H) \cdot \sqrt{T})$ regret bound. \citet{ayoub2020model} focused on linear mixture MDPs and proposed the UCRL-VTR algorithm with $\tilde{\cO}(\mathrm{poly}(d, H) \cdot \sqrt{T})$ regret bound, where $d$ is the ambient dimension. In comparison, we leverage quantum tools and design novel algorithms with logarithmic regret for both tabular and linear mixture MDPs. 

For classical RL, \citet{jaksch2010near,jin2018q,zhou2021nearly} showed that, without additional assumption, the $\sqrt{T}$ regret is inevitable. To this end, some works \citep{simchowitz2019non,yang2021q,he2021logarithmic,xu2021fine,dong2022asymptotic} proposed algorithms with logarithmic gap-dependent regret. In contrast, we establish logarithmic worst-case regret in quantum RL.

Quantum machine learning has become a popular research topic in recent years~\citep{schuld2015introduction,biamonte2017quantum,arunachalam2017guest,dunjko2018machine}. In particular, the study of quantum RL is rapidly advancing \citep[see e.g.,][]{meyer2022survey}. In the following, we make comparisons to existing literature in quantum RL related to our work.

\begin{itemize}[leftmargin=*]
\item Quantum bandits: For the special model of bandits,~\citet{casale2020quantum,wang2021quantum}  studied the exploitation problem of best-arm identification of multi-armed bandits. As for exploration,~\citet{wan2022quantum} proposed quantum algorithms for multi-armed bandits and linear bandits with logarithmic regret, and~\citet{li2022approx} extended logarithmic regret guarantee to quantum algorithms for stochastic convex bandits.~\citet{lumbreras2022multi} studied exploration versus exploitation when learning properties of quantum states. We note that these papers focused on the bandit problems and are not as general as the tabular and linear mixture MDPs studied in this paper.
\item Quantum policy iteration:~\citet{cherrat2022quantum,wiedemann2022quantum} studied how to train MDPs on quantum computers via policy iteration, and~\citet{cornelissen2018quantum,jerbi2022quantum} studied quantum algorithms for computing policy gradients. These papers focused on how to train MDPs on quantum computers and did not study the exploration guarantee of quantum algorithms.
\item Quantum RL with quantum environments: ~\citet{dunjko2015framework,dunjko2016quantum,dunjko2017advances,hamann2021quantum,saggio2021experimental,hamann2022performance} considered genuine quantum environments and corresponding quantum RL models. These results exploit RL to solve quantum problems, while we focus on classical RL problems and use quantum computing to make their exploration more efficient.
\end{itemize}

{
\paragraph{Comparison with Concurrent Work \citep{ganguly2023quantum}.} 
The concurrent and independent work \citep{ganguly2023quantum} also focuses on quantum RL and establishes logarithmic regret for \textbf{tabular} MDPs. We provide comparisons between \citet{ganguly2023quantum} and our work below.
\begin{itemize}[leftmargin=*]
\item The problem settings are different. \citet{ganguly2023quantum} assumes an MDP proceeding classically, with an agent providing a certain action $a_h$ at time step $h$, getting reward $r_h$, and stepping to next state $s_{h+1}$ from $s_h$. Here $(s_h, a_h, r_h, s_{h+1})$ are collected as classical information. For each step, the agent is assumed to observe $S$ additional quantum samples returned by $S$ unitaries: $U_s \ket{0} = \sqrt{1-\mathcal{P}_h(s_{h+1} = s \mid s_h,a_h)}\ket{0} + \sqrt{\mathcal{P}_h(s_{h+1} = s\mid s_h,a_h)}\ket{1}, \forall s \in \mathcal{S}$, which encode the probability of making transition to $s_{h+1}$ by taking action $a_h$ at state $s_h$ in order to estimate the distribution $\mathcal{P}_h(s_{h+1} = \cdot \mid s_h,a_h) \in \dbR^S $.
In contrast, our work considers a quantum-accessible environment. In our work, the agent is allowed to take a quantum policy as \eqn{quantum_policy}, which is stochastic and is naturally evaluated by unitaries. The transition and reward are also given by quantum unitaries as \eqn{transition_oracle} and \eqn{reward_oracle}. The whole episode of MDP is quantized and it is natural to quantum computers. Due the the different models studied by \citet{ganguly2023quantum} and us, our regret bound is worse than their regret bound by a factor of $H$. We want to emphasize that our analysis immediately implies the same regret bound as theirs under their setting. 
\item More importantly, our work also considers linear mixture MDPs, which are not considered by \citet{ganguly2023quantum}. This setting is significantly more challenging than the tabular setting. To establish the logarithmic regret for linear mixture MDPs, we develop new algorithm designs and theoretical analysis, as detailed in the ``Challenges and Technical Overview'' part of the introduction section, \sec{alg:linear}, and  \append{appendix:novelty}.
\end{itemize}}

\section{Conclusion}

In this paper, we initiate the study of the online exploration problem in quantum RL.
We propose a novel algorithm for tabular MDPs and prove that the proposed algorithm enjoys a logarithmic regret. Furthermore, we extend our results to the linear function approximation setting, which is capable of handling problems with large state space. 
To our best knowledge, we provide the first theoretical understanding of quantum speedup in online RL. 
We believe our work opens up many promising directions for future investigation:

\paragraph{Tighter regret bounds in tabular and linear settings.} 
Since the MDP with a generative model is a simpler setting than the online quantum RL considered by us, the sample complexity lower bound $\Omega(S\sqrt{A}H^{1.5}/\epsilon)$ in the previous work \citet{wang2021quantum} immediately implies the sample complexity lower bound for online quantum RL. By the relationship between regret and sample complexity (see \append{appendix:lower:bound}), this $\Omega(S\sqrt{A}H^{1.5}/\epsilon)$ further implies a $\Omega(S\sqrt{A}H^{1.5})$ regret lower bound ($C = S\sqrt{A}H^{1.5}$ therein). Regarding the lower bound for linear mixture MDPs, we can regard the hard instance constructed in \citep{wang2021quantum} as a linear mixture MDP with $d = S^2A$ (see also our discussions below \eqref{eq:def:PP}). Hence, by the same proof in \citet{wang2021quantum}, we can obtain an $\Omega(\sqrt{dH^3}/\epsilon)$ sample complexity lower bound and an $\Omega(\sqrt{dH^3})$ regret lower bound for linear mixture MDPs. It would be interesting to derive tighter or even minimax optimal regret bound for these setting.

\paragraph{General function approximation.}
It is natural to ask whether it is possible to establish a logarithmic regret bound in RL with general function approximation. A core challenge is the absence of uniform convergence. The uniform convergence in classical RL enables us to estimate $\dbE[f(x)]$ using $n$ samples of $x$ for any $f \in \caF$ up to error $\log (|\caF|)/\sqrt{n}$. However, it remains elusive if we can design a quantum algorithm to return an estimator with error $\log (|\caF|)/ n$. Furthermore, our techniques for bypassing martingale analysis seem challenging to extend the results to linear MDPs \citep{jin2020provably} and more general settings.

\paragraph{Reduction-based framework.} Another interesting question is whether can we design a reduction-based framework that can convert any classical algorithms with $\sqrt{T}$-regret to a quantum algorithm with logarithmic regret.

\bibliographystyle{ims}
\bibliography{ref}

\newpage

\appendix
\onecolumn

{
\section{Additional Discussions}
\subsection{Discussions on Regret Lower Bound and Sample Complexity} \label{append:appendix:lower:bound}

We first discuss the relationship between regret guarantee and sample complexity. Similar arguments have been presented in \citet{jin2018q}.

\begin{itemize}[leftmargin=*]
\item \textbf{A regret upper bound guarantee implies a sample complexity upper bound.} If an algorithm executes policies $\{\pi_i\}_{i=1}^T$ and incurs a regret $\mathrm{Reg}(T)$ in $T$ episodes, the algorithm can output a policy $\bar{\pi} = \mathrm{Uniform}(\{\pi_i\}_{i=1}^T)$ (uniformly select a policy). The gap between $\bar{\pi}$ and the optimal policy $\pi^*$ is $\mathrm{Reg}(T)/T$. Solving the inequality $\mathrm{Reg}(T)/T \le \epsilon$ gives the sample complexity of the algorithm. For example, a classical $\sqrt{T}$-regret upper bound implies a $\epsilon^{-2}$ sample complexity; our $\log T$-regret upper bound implies a $\epsilon^{-1}$ sample complexity.
\item \textbf{A sample complexity lower bound implies a regret lower bound guarantee.} Assuming $g(\epsilon)$ is a known sample complexity lower bound, for any algorithm with regret $\mathrm{Reg}(T)$, it implies a sample complexity $f(\epsilon)$ as discussed above. Solving the inequality $f(\epsilon) \ge g(\epsilon)$ yields a regret lower bound. For example, suppose there is a hard instance where every algorithm necessitates $C/\epsilon$ (resp. $C/\epsilon^2$) sample complexity to obtain the $\epsilon$ optimal policy, where $C$ involves certain problem parameters (e.g., $S, A, H, d$) and sufficiently large constants. Now, assume the existence of an algorithm that achieves regret $o(C)$ (resp. $o(\sqrt{CT})$). This implies an $o(C/\epsilon)$ (resp. $o(C/\epsilon^2)$) sample complexity, which contradicts the sample complexity lower bound. Hence, we can conclude that in this hard instance, every algorithm incurs a regret at least $\Omega(C)$ (resp. $\Omega(\sqrt{CT})$).
\end{itemize}


\paragraph{Further discussions on the gap between upper bounds and lower bounds.} Ignoring the logarithmic terms (in $T$ or $\epsilon$), our regret/sample complexity can be improved in terms of $(S, A, H)$/$(d, H)$. We conjecture that all these dependencies can be enhanced. One potential approach is using the variance reduction technique, as in classical online RL \citep{azar2017minimax,zhou2021nearly}. However, we currently lack an understanding of how to apply this in the context of online quantum RL. Moreover, deriving the minimax optimal sample complexity in MDPs with a generative model remains an open question, and addressing the sample complexity problem in this simpler setting seems more feasible and may provide more insights into obtaining sharper bounds in online quantum RL. As an initial exploration of online quantum RL, we leave achieving the tighter or even minimax optimal regret bound as future work.

\subsection{More Discussions on Our Novelties} \label{append:appendix:novelty}
\paragraph{Algorithm design novelties:}
One can regard the bandits as the MDPs satisfying (i) the state space only contains a single dummy state $s_{\mathrm{dummy}}$; (ii) $H=1$; and (iii) the reward only depends on the action chosen by the learner, i.e., $r(s_{\mathrm{dummy}}, a) = r(a)$. In this case, the learner can repeatedly pull a particular arm $a$ to collect sufficient samples to estimate $r(s_{\mathrm{dummy}}, a)$. However, in online RL, the state will transit according to the transition kernel, preventing repeated arm pulls for the desired estimator. To address this, we introduce a novel adaptive lazy-updating rule, quantifying the uncertainty of the visiting state-action pair and updating the model infrequently through the doubling trick. This algorithm design, absent in previous works on quantum bandits \citep{wan2022quantum} and quantum RL with a generative model \citep{wang2021quantum}, connects online exploration in quantum RL with strategic lazy updating mechanisms, inspiring subsequent algorithm design in online quantum RL.

\paragraph{Technical novelties:}

In the \textbf{tabular} setting, a significant technical challenge is obtaining a high probability regret bound, as vanilla regret decomposition (e.g., the one used in UCBVI) leads to a martingale term of order $\cO(H\sqrt{T})$. Though it is not the dominating term in the classical setting, it would become the dominating term in our setting. We found that another regret decomposition (i.e., the one used in EULER) based on the expected Bellman error cleverly bypasses such martingale terms in the regret, thus achieving the desired regret bound.

In the \textbf{linear} setting, the algorithm requires a novel design to leverage the advantage of quantum mean estimation. Classical linear RL analysis heavily relies on martingale concentration, such as the well-known self-normalized concentration bound for vectors. However, in quantum RL, there is no direct counterpart to martingale analysis. Consequently, we redesign the lazy updating frequency and propose an entirely new technique to estimate features for building the confidence set for the model. Notably, previous works on quantum bandits \citep{wan2022quantum} do not need to take this step since there is no unknown transition kernel in their setting. Moreover, estimating the features poses a subtle technical challenge, as elaborated in \eqref{eq:est:goal}. Our proposed algorithm (\alg{estimate:feature}), which features binary search and quantum mean estimation, successfully addresses this technical challenge. Meanwhile, the quantum samples used in feature estimation approximately equal the number of quantum samples in each phase, eliminating extra regret for this additional feature estimation. This algorithm design (\alg{estimate:feature}) and theoretical analysis are entirely new in the literature on (classical and quantum) RL theory.
}

\section{Additional Explanations on Quantum Computing}
\label{append:append_qdetails}
In this section, we supplement the details about the basics of quantum computing, and how to simulate classical sampling in quantum-accessible environments.

\subsection{Oracles}
First, we give the explicit definition about the probability oracle and binary oracle mentioned in \lem{quantum_amplitude_est} and \lem{quantum:mean}.

\begin{definition}[Probability Oracle]\label{def:quantum_probability_oracle}
Consider a probability distribution $p$ on a finite set $\Omega$. We say $U_p$ is a probability oracle for $p$ if
$$
U_p\colon \ket{0}\rightarrow \sum_{\omega\in \Omega}\sqrt{p(\omega)}\ket{\omega}\ket{\phi_{\omega}}
$$
where $\{\ket{\omega}\}_{\omega\in\Omega}$ are orthonormal vectors representing $\omega \in \Omega$, and $\ket{\phi_{\omega}}$ are ancilla quantum states. Commonly we set $\Omega = \{1,\ldots,n\}$ for some integer $n$. 
\end{definition}

\begin{definition}[Binary Oracle]\label{def:quantum_binary_oracle}
Consider a random variable $X\colon\Omega\rightarrow \mathbb{R}$ on a finite probability space ($\Omega$, $p$). We say $U_X$ is a binary oracle for $X$ if
$$
U_X\colon \ket{\omega}\ket{0}\rightarrow\ket{\omega}\ket{X(\omega)} \quad \forall \omega \in \Omega
$$
\end{definition}

In the above definitions $\ket{0}$ can be short for $\ket{0\ldots 0}$ with more than one qubits.

\begin{remark}
A quantum oracle is not only applied to $\ket{0}$ or the computational basis. It is a unitary that acts on any quantum state in the corresponding Hilbert space. For example, the binary oracle $U_X$ can be used as below:
$$
\sum_{\omega \in \Omega} a_{\omega}\ket{\omega}\ket{0} \xrightarrow{U_X} \sum_{\omega \in \Omega} a_{\omega}\ket{\omega}\ket{X(\omega)} 
$$
where $\sum_{\omega \in \Omega} a_{\omega} \ket{\omega}$ is a normalized quantum state, i.e., $\sum_{\omega\in\Omega}|a_{\omega}|^2 = 1$.
\end{remark}

\subsection{Classical Sampling via Quantum Access}
\label{append:appendix_qsample}
We introduce the subroutine Classical Sampling via Quantum Access (CSQA) here \citep{jerbi2022quantum, wiedemann2022quantum}. On the input policy $\pi$ and target step $h$, CSQA returns a sample $s_h \sim d^\pi_h$ using at most 1 episode of quantum interactions (one call to each oracle $\bar{\caP}_{h'}$ for $h' < h$). Note that according to \citet{grover2002creating} we know the unitary $\Pi_h$ is efficiently computable given a classical policy $\pi = \{\pi_h\}_{h=1}^H$ for any $h \in [H]$.

\begin{algorithm}[h]
   \caption{Classical Sampling via Quantum Access (CSQA)}
   \label{alg:csqa}
\begin{algorithmic}[1]
   \STATE {\bfseries Input:} policy $\pi$, target step $h$.
   \STATE Prepare $\ket{\varphi} := \ket{s_1} \ket{0}_{\caA} (\ket{0}_{\caS} \ket{0}_{\caA})^{\otimes (h - 2)} \ket{0}_{\caS} $.
   \FOR {$h' = 1, 2, ..., h - 1$}
   \STATE Call $\Pi_{h'}$ on the $(2h' - 1)$-th and $2h'$-th register of $\ket{\varphi}$. 
   \IF{$h' < h - 1$}
   \STATE Call $\bar{\caP}_{h'}$ on the $(2h' - 1)$-th, $2h'$-th, and $(2h' + 1)$-th register of $\ket{\varphi}$.
   \ENDIF
   \ENDFOR
   \STATE {\bfseries Output:} Measure the last register of $\ket{\varphi}$ in the standard basis of $\bar{\caS}$, and output the result. 
\end{algorithmic}
\end{algorithm}

Here the notation $\ket{0}_{\caS}$ and $\ket{0}_{\caA}$ are used to distinguish the superposition of states and actions, $\otimes$ denotes the tensor product. The implementation of CSQA is simply a quantum simulation of the classical sampling procedure. We use $\ket{\varphi_{h'}}$ to denote the $(2h' - 1)$-th register of $\ket{\varphi}$. Starting with $h' = 1$, CSQA performs the quantum evaluation of $\pi_{h'}$ on $\ket{\varphi_{h'}}$ and get the result $\Pi_{h'} \ket{\varphi_{h'}} \ket{0}_{\caA}$ at first. Then CSQA calls $\bar{\caP}_{h'}$ on $\Pi_{h'} \ket{\varphi_{h'}} \ket{0}_{\caA} \ket{0}_{\caS}$ to obtain $\ket{\varphi_{h' + 1}}$ as the last register. It is straightforward to verify the correctness of this process: if 
\begin{align*}
\ket{\varphi_{h'}} = \sum_s \sqrt{d^{\pi}_{h'}(s)} \ket{s},
\end{align*}
by the definition in \eqref{eqn:transition_oracle}, we know the last register of $\bar{\caP}_{h'} \Pi_{h'} \ket{\varphi_{h'}} \ket{0}_{\caA} \ket{0}_{\caS}$ (i.e., $\ket{\varphi_{h'+1}}$) equals
\begin{align*}
\ket{\varphi_{h'+1}} = \sum_s \sqrt{d^{\pi}_{h'+1}(s)} \ket{s},
\end{align*}
since\footnote{With a little abuse of notations, we use $d^\pi_h(\cdot, \cdot)$ to denote the occupancy measure of $\pi$ at step $h$ over state-action space.}
\begin{align*}
\Pi_{h'} \ket{\varphi_{h'}} \ket{0}_{\caA} \ket{0}_{\caS} = \sum_{s,a} \sqrt{d^{\pi}_{h'}(s,a)} \ket{s,a} \ket{0}_{\caS} \xrightarrow{\bar{\caP}_{h'}} \sum_{s,a,s'} \sqrt{d^\pi_{h'}(s, a) \caP_h(s' \mid s, a)} \ket{s, a, s'}
\end{align*}
be definition. Therefore, the last register $s'$ is equal to
\begin{align*}
\sum_{s'} \sqrt{\sum_{s, a} d^\pi_{h'}(s, a) \caP_h(s' \mid s, a)} \ket{s'} = \sum_{s'} \sqrt{d^\pi_{h'+1}(s')} \ket{s'} = \ket{\varphi_{h'+1}}.
\end{align*}
Finally, we can measure the last register of $\ket{\varphi}$ and obtain a  sample $\ket{s_h}$ with probability $d^\pi_h(s_h)$.

{
\begin{remark}[Discussion on Counter Updating] \label{remark:counter}
First, we would like to clarify that tracking the number of quantum samples $n_h(s, a)$ does not require the agent to know the next state, whether it is a classical state or a quantum state. Tracking $n_h(s, a)$ only requires knowing the classical state $(s, a)$ at step $h$, which is achieved by CSQA (\alg{csqa}). This algorithm returns a classical state $(s, a)$ at step $h$ according to the given roll-out policy. When we use $(s, a)$ to query the transition oracle at step $h$, it is equivalent to apply $\bar{\mathcal{P}}_h$ on the input state $\ket{s, a} \ket{0}$, since $\ket{s, a}$ denotes the quantum representation of $(s, a)$ in the Hilbert space of quantum superpositions. A quantum sample is returned after the query, so we can increase the counter $n_h(s, a)$ by 1 since we gain a new independent quantum sample $\bar{\mathcal{P}}_h \ket{s, a} \ket{0}$.
\end{remark}
}

\section{Quantum Oracles Used in the Algorithms}
\label{sec:appendix_qproboracle}

For the tabular MDPs, we use the quantum probability oracles for $\caP_h(\cdot \mid s, a)$ for each $(s, a, h) \in \caS \times \caA \times [H]$ to construct estimators $\hat{\caP}_h(\cdot \mid s, a)$. For the linear mixture MDPs, we construct the quantum probability oracles and quantum binary oracles for the feature $\phi_h^k$ and the regression target $\dbE_{\pi^k}[\dbP_h V_{h+1}^k (s_h, a_h)]$ to estimate them efficiently. In this section, we describe how to construct these quantum probability oracles in detail. Note that in our case, the quantum oracle encodes a classical distribution; for simplicity, we omit ancilla quantum states in probability oracles in ~\defn{quantum_probability_oracle}. The simplified probability oracle is also sufficient for estimation \citep{van2021quantum,cornelissen2022near}. See also~\citet{gilyen2020distributional} for the relationships between quantum oracles that encode distributions.

\subsection{Model Estimation in Tabular MDPs}
\label{append:appendix_qproboracle_tabular}

The only usage of the quantum probability oracles in Quantum UCRL (\alg{ucrl2_q}) is at Line \ref{alg: qucrl_proboracle}, when the number of visits $n_h(s_h,a_h)$ reaches $2^{l_h(s_h,a_h)}$ for step $h$ and some state-action pair $(s_h, a_h)$. At this time, the dataset $\caD_h(s_h, a_h)$ stores all the quantum samples querying $\bar{\caP}_h$ on the same input $\ket{s_h,a_h} \ket{0}$. By definition of $\bar{\caP}_h$, this is equivalent to the quantum samples obtained by querying the following quantum probability oracle
\begin{align}
U_{h, s_h, a_h}: \ket{0} \to \sum_{s'} \sqrt{\caP_h(s' \mid s_h, a_h)} \ket{s'}. 
\end{align}

Note that we have $n_h(s_h, a_h)$ independent quantum samples obtained by calling $U_{h, s_h, a_h}$ (or its inverse). Therefore, we can turn to the quantum multi-dimensional amplitude estimation subroutine (\lem{quantum_amplitude_est}) to obtain an estimate $\hat{\caP}_h(\cdot \mid s_h, a_h)$ such that 
\begin{align}
\left\|\hat{\caP}_h(\cdot \mid s_h, a_h) - \caP_h(\cdot \mid s_h, a_h)\right\|_1 \leq \frac{c_1 S}{n_h(s_h, a_h)} \log \frac{S}{\delta}
\end{align}
with probability at least $1 - \delta$.

{
\begin{remark}[Discussion on Binary Oracle in Tabular MDPs]
    We can definitely generalize the results of \lem{quantum_amplitude_est} to \lem{quantum:mean} with binary oracle in the tabular setting. This is done in the following way. For a fixed $(s, a)$ pair, we define the binary oracle $U_{s, a} : \ket{s, a} \ket{s'} \ket{0} \to \ket{s, a} \ket{\vec{1}[s']}$, where $\vec{1}[s']$ is an $S$-dimensional standard basis vector encoded by $s'$ (i.e., a one-hot vector with non-zero entry at the $s'$ position). Combined with the probability oracle $\bar{\mathcal{P}}_h: \ket{s, a} \ket{0} \to \sum_{s'} \sqrt{\mathcal{P}_h(s' \mid s, a)} \ket{s'}$, we can estimate the $S$-dimensional vector $(\mathcal{P}_h(s_1 \mid s, a), \mathcal{P}_h(s_2 \mid s, a), ..., \mathcal{P}_h(s_{S} \mid s, a))^\top \in \mathbb{R}^S$ suppose $s'$ is enumerated from the state space $\{s_1, s_2, ..., s_{S}\}$ using \lem{quantum:mean}. Since we hope the $l_1$ estimation error to be bounded by $\epsilon$, the sample complexity of such estimation is $\mathcal{O}(S \log (S/\delta) / \epsilon)$, the same as \lem{quantum_amplitude_est}. Since the target vector $(\mathcal{P}_h(s_1 \mid s, a), \mathcal{P}_h(s_2 \mid s, a), ..., \mathcal{P}_h(s_{S} \mid s, a))$ is actually a distribution over the state space, we can also encode this vector as the amplitude of a quantum superposition and use \lem{quantum_amplitude_est} to estimate the amplitude. We choose \lem{quantum_amplitude_est} mainly because the quantum multi-dimensional amplitude estimation subroutine is conceptually simpler without requiring an additional binary oracle, so it is more efficient in implementation.
\end{remark}
}

\subsection{Feature Estimation in Linear Mixture MDPs}
\label{append:appendix_qproboracle_feature}

Recall that we need to estimate the feature 
\begin{align}
\phi^k_h= \dbE_{\pi^k} \left[\phi_{V^k_{h+1}}(s_h, a_h)\right]    
\end{align}
in order to perform the value target regression in the phase $k$ of Quantum UCRL-VTR (\alg{quantum:vtr}). Here $\phi_{V^k_{h+1}}(\cdot, \cdot)$ is a known feature mapping. 
Define the binary oracle $U_{V, k, h+1}$ for $\phi_{V^k_{h+1}}(\cdot, \cdot)$ as follows 
\begin{align}
U_{V, k, h+1}: \ket{s, a} \ket{0} \to \ket{s, a} \ket{\phi_{V^k_{h+1}}(s, a)}.
\end{align}
Note that as $\phi_{V^k_{h+1}}(\cdot, \cdot)$ is entirely known, we can construct this binary oracle by applying a series of $SA$ controlled gates on the two registers, where the $i$-{th} controlled gate  ($i\in[SA]$) applies an addition of $\phi_{V^k_{h+1}}(s, a)$ to the second register if and only if the first register equals to the $i$-th pair of $(s,a)$ among all $SA$ possibilities. Each of these gates can be implemented on $\cO(\log SA)$ qubits with gate complexity $\cO(\log SA)$~\citep{barenco1995elementary}.

Next, we construct the quantum probability oracle
\begin{align}
U_{k, h} : \ket{0} \to \sum_{s, a} \sqrt{d^{\pi^k}_h(s, a)} \ket{s, a}    
\end{align}
by the same way as in the CSQA subroutine (\alg{csqa}), where we do not measure $\varphi_h$ at the last line of CSQA and outputs $\Pi^k_h \varphi_h \ket{0}$ instead for $\Pi^k_h$ being the quantum evaluation of $\pi^k_h$. 

Now equipped with the quantum probability oracle $U_{k,h}$ and binary oracle $U_{V,k,h+1}$, we can construct the estimation of $\phi^k_h$ by the quantum multivariate mean estimation subroutine (\lem{quantum:mean}) to build estimators $\hat{\phi}^k_h$ as in \alg{estimate:feature}.

\subsection{Regression Target Estimation in Linear Mixture MDPs}
\label{append:appendix_qproboracle_target}

Similar to the feature estimation, we can construct the quantum probability oracle for the regression target 
\begin{align}
\dbE_{\pi^k} \left[\dbP_h V_{h+1}^k(s_h, a_h)\right].
\end{align}
Note that by definition this term is equal to 
\begin{align}
\dbE_{\pi^k} \left[V_{h+1}^k(s_{h+1})\right].
\end{align}
Therefore, it suffices to construct the binary oracle for $V_{h+1}^k$ 
\begin{align}
U'_{V, k, h+1} : \ket{s} \ket{0} \to \ket{s} \ket{V_{h+1}^k(s)},
\end{align}
and construct the quantum probability oracle 
\begin{align}
U'_{k, h} : \ket{0} \to \sum_{s} \sqrt{d^{\pi^k}_{h+1}(s)} \ket{s}.
\end{align}
These two oracles can be constructed by the same way discussed in the previous section. As a result, we can leverage the quantum mean estimation subroutine (\lem{quantum:mean}) to estimate the regression target as desired in \alg{estimate:target}.

\section{Proofs for Tabular MDPs}
\label{append:appendix_tabular}


\subsection{Proof}

The total number of phases is obviously $K := \lceil T / H \rceil$. Define the surrogate regret as
\begin{align*}
\overline{\Reg}(K) := \sum_{k=1}^K V^*_1(s_1) - V^{\pi^k}_1(s_1).
\end{align*}
Then the cumulative regret is bounded by 
\begin{align*}
\Reg(T) \leq H \cdot \overline{\Reg}(K)
\end{align*}
because the policy $\pi^k$ is fixed in phase $k$ and the length of phase $k$ is at most $H$.

To bound the surrogate regret, we first rule out several failure events. Define the failure event $\caE$ as
\begin{align*}
\caE & := \bigcup_{s,a, h} \caE^1_{s, a, h} \bigcup \caE^2\\
\caE^1_{s, a, h} & := \left\{\exists k \in [K],  \left\|\hat{\caP}^k_h(\cdot \mid s, a) - \caP_h(\cdot \mid s, a)\right\|_1 > \min\left(\frac{2c_1 S L}{n^k_h(s, a)},2\right) \right\}, \forall (s, a, h) \in \caS \times \caA \times [H] \\
\caE^2 & := \left\{\exists (s, a, k, h) \in \caS \times \caA \times [K] \times [H], n^k_h(s, a) < \frac{1}{2} \sum_{j<k} w^j_h(s, a)- \log \frac{2S A H}{\delta} \right\},
\end{align*}
where $n^k_h$ denotes the counter $n_h$ at the beginning of phase $k$, $\hat{\caP}^k_h$ denotes the estimator $\hat{\caP}_h$ at the beginning of phase $k$. $L:= \log (S^2AH \log (T) / \delta)$ is a logarithmic term. $w^k_h(s, a) := \mathrm{Pr}_{\pi^k}(s_h = s, a_h = a)$ is the visiting probability of $(s, a)$ at step $h$ by $\pi^k$.

\begin{lemma}[Failure Event]
\label{lem:failure_event_tabular}
It holds that $\mathrm{Pr}(\caE) \leq \delta$.
\end{lemma}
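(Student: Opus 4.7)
The plan is to allocate half of the failure budget $\delta$ to each of the two event families and bound them independently. For $\bigcup_{s,a,h}\mathcal{E}^1_{s,a,h}$, I exploit the doubling-trick schedule in \alg{ucrl2_q}: for each triple $(s,a,h)$ the estimator $\hat{\mathcal{P}}_h(\cdot\mid s,a)$ is recomputed only when the counter $n_h(s,a)$ crosses a power of two, so there are at most $\lceil\log_2 T\rceil+1$ recomputations per triple over the entire run. Each recomputation invokes \lem{quantum_amplitude_est} on the $\tilde{n}_h(s,a)$ fresh quantum samples stored in $\mathcal{D}_h(s,a)$, with target $\ell_1$-accuracy $\varepsilon = c_1 SL/\tilde{n}_h(s,a)$ for a suitable absolute constant $c_1$ and failure probability $\delta' = \delta/(2SAH\lceil\log_2 T\rceil)$. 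Since the doubling scheme guarantees $\tilde{n}_h(s,a) = \Theta(n_h(s,a))$ throughout the epoch during which that estimator is in force, the resulting bound $\|\hat{\mathcal{P}}^k_h(\cdot\mid s,a)-\mathcal{P}_h(\cdot\mid s,a)\|_1 \le 2c_1 SL/n^k_h(s,a)$ holds at every phase $k$ within the epoch (up to absorbing the $\Theta(\cdot)$ constant into $c_1$); the truncation at $2$ in the definition of $\mathcal{E}^1_{s,a,h}$ handles the pre-first-update regime where $n^k_h = 0$ and the statement is otherwise vacuous. A union bound over the $SAH$ triples and the $\mathcal{O}(\log T)$ updates then yields $\Pr\bigl(\bigcup_{s,a,h}\mathcal{E}^1_{s,a,h}\bigr) \le \delta/2$.

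For $\mathcal{E}^2$ I recast the visitation counter as a martingale sum and apply a Bernstein-type deviation bound. Fix $(s,a,h)$ and set $X_j := \mathbf{1}\{(s^j_h,a^j_h)=(s,a)\}$; conditionally on the history prior to phase $j$ we have $\mathbb{E}[X_j] = w^j_h(s,a)$, $\mathrm{Var}(X_j) \le w^j_h(s,a)$, and $|X_j|\le 1$, so $\{X_j - w^j_h(s,a)\}_j$ is a bounded martingale difference sequence. Writing $W_k := \sum_{j<k} w^j_h(s,a)$, a time-uniform Freedman-type inequality then gives: with probability at least $1-\delta/(2SAH)$, for every $k\in[K]$, $W_k - n^k_h(s,a) \le \sqrt{2W_k \iota_0} + c_2\iota_0$, where $\iota_0 := \log(2SAH/\delta)$ and $c_2$ is an absolute constant. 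Bounding $\sqrt{2W_k\iota_0} \le W_k/2 + \iota_0$ by AM-GM and rearranging yields $n^k_h(s,a) \ge W_k/2 - c_3\iota_0$ with $c_3 = c_2 + 1$, which, up to absorbing the constant $c_3$ into the definition of $\mathcal{E}^2$, is exactly the complement of $\mathcal{E}^2_{s,a,k,h}$. A union bound over the $SAH$ triples then gives $\Pr(\mathcal{E}^2) \le \delta/2$, and combining with the first half yields $\Pr(\mathcal{E}) \le \delta$.

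The main obstacle is keeping the logarithmic factors tight in both halves simultaneously: for $\mathcal{E}^1$ it is essential that the union bound be taken over only the $\mathcal{O}(\log T)$ recomputation events rather than over all $T$ episodes, which is precisely what the doubling trick buys and why the quantum estimation error appears as $SL/n$ rather than $SL\log T/n$; for $\mathcal{E}^2$ it is essential to employ a time-uniform rather than fixed-time Bernstein/Freedman inequality, so that $\iota_0$ depends only on $\log(SAH/\delta)$ and not on $\log T$. Once these two ingredients are in place, the constants in the definitions of both events comfortably absorb the slack from the doubling scheme and the AM-GM step, and the two bounds compose transparently via the final union bound.
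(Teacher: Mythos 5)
Your proof is correct and takes essentially the same route as the paper: a union bound over the $SAH$ triples and the $\mathcal{O}(\log T)$ doubling-triggered re-estimations (each invoking the quantum amplitude-estimation guarantee with $\tilde{n}_h \geq n_h/2$ samples) for $\bigcup_{s,a,h}\mathcal{E}^1_{s,a,h}$, and the standard martingale lower bound on visitation counts for $\mathcal{E}^2$, which the paper simply cites from \citet{zanette2019tighter} and \citet{dann2019policy} while you sketch its proof via a time-uniform Freedman inequality and AM--GM. The only cosmetic discrepancy is that your Freedman-plus-AM--GM step yields $n^k_h(s,a) \ge \frac{1}{2}\sum_{j<k} w^j_h(s,a) - c_3\log\frac{2SAH}{\delta}$ with an absolute constant $c_3>1$, whereas $\mathcal{E}^2$ is stated with constant exactly $1$ (as given by the sharper multiplicative Chernoff-type martingale bound in the cited references); you flag this slack yourself, and it only perturbs downstream constants.
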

\begin{proof}
Fix any $(s, a, h)$, the estimators $\hat{\caP}^k_h(\cdot \mid s, a)$ switches for at most $\log T$ times during $K$ episodes, because it switches only when $n_h(s, a)$ doubles. According to \append{appendix_qproboracle_tabular}, we know for any fixed $k \in [K]$, it holds that 
\begin{align*}
\left\|\hat{\caP}^k_h(\cdot \mid s, a) - \caP_h(\cdot \mid s, a)\right\|_1 \leq \min\left(\frac{2 c_1 S L}{n^k_h(s, a)},2\right)
\end{align*}
with probability $1 - \delta / 2SAH \log (T)$. By union bound we have $\mathrm{Pr}(\caE^1_{s, a, h}) \leq \delta / 2SAH$. Using another union bound gives $\mathrm{Pr}(\cup_{s, a, h}\caE^1_{s, a, h}) \leq \delta /2$.

On the other hand, we know $\mathrm{Pr}(\caE^2) \leq \delta / 2$ according to Section D.4 of \citet{zanette2019tighter} or Section B.1 of \citet{dann2019policy}. As a result, we know $\mathrm{Pr}(\caE) \leq \delta$.
\end{proof}

Motivated by the failure event, we formally define the bonus function $b_h(s, a)$ for $(s, a) \in \caS \times \caA$ as 
\begin{align}
\label{eqn:bonus_func_tabular}
b_h(s, a) := \min\left(\frac{2 c_1 H S L}{n_h(s, a)},2H\right).
\end{align}
Next we show the optimism of the value functions.
\begin{lemma}[Optimism]
\label{lem:optimism_tabular}
Denote the optimistic functions $Q$ and $V$ at the beginning of phase $k$ by $Q^k, V^k$. Outside the failure event $\caE$ it holds that for all $k \in [K]$
\begin{align*}
V^k_1(s_1) \geq V^*_1(s_1).
\end{align*}
\end{lemma}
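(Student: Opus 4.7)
The plan is to prove the stronger pointwise statement $Q_h^k(s,a) \ge Q_h^*(s,a)$ for all $(s,a,h,k)$ by backward induction on $h$, from which $V_1^k(s_1) \ge V_1^*(s_1)$ follows immediately by taking a maximum over $a$ at $h=1$. The base case $h = H+1$ is trivial since we set $V_{H+1}^k = 0 = V_{H+1}^*$ by construction and the Bellman equation \eqref{eq:bellman}.

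For the inductive step, suppose $V_{h+1}^k(s) \ge V_{h+1}^*(s)$ for every $s \in \cS$. By the Bellman equation \eqref{eq:bellman}, $Q_h^*(s,a) = R_h(s,a) + \PP_h V_{h+1}^*(s,a) \le H$, so because of the clipping in \eqref{eqn:ucbvi_qfunction} it suffices to show
\[
R_h(s,a) + \hat{\PP}_h V_{h+1}^k(s,a) + b_h(s,a) \;\ge\; R_h(s,a) + \PP_h V_{h+1}^*(s,a).
\]
Using the inductive hypothesis and monotonicity of $\PP_h$, we have $\PP_h V_{h+1}^*(s,a) \le \PP_h V_{h+1}^k(s,a)$, so it is enough to establish
\[
b_h(s,a) \;\ge\; (\PP_h - \hat{\PP}_h) V_{h+1}^k(s,a).
\]
This is the concentration step, and I would attack it by Hölder: the left-hand side of the last display is bounded in absolute value by $\|\cP_h(\cdot\mid s,a) - \hat{\cP}_h(\cdot\mid s,a)\|_1 \cdot \|V_{h+1}^k\|_\infty$. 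Outside the failure event $\caE$, the $\ell_1$ error is at most $\min(2c_1 SL/n_h^k(s,a),\,2)$ by the definition of $\caE_{s,a,h}^1$, and the clipping in \eqref{eqn:ucbvi_qfunction} combined with $V_h^k = \max_a Q_h^k$ gives $\|V_{h+1}^k\|_\infty \le H$. Multiplying these two bounds yields exactly $b_h(s,a)$ as defined in \eqref{eqn:bonus_func_tabular}, closing the induction.

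The one subtlety I want to flag, rather than an obstacle per se, is that $V_{h+1}^k$ is a random function depending on all data collected up to phase $k$, so one might worry about a union bound over value functions when invoking \lem{quantum_amplitude_est}. The lazy-updating/doubling trick sidesteps this: the estimator $\hat{\cP}_h^k(\cdot\mid s,a)$ is built from a fixed batch of quantum samples and is independent of $V_{h+1}^k$ as a distribution estimator, and the Hölder step is a pointwise deterministic inequality once the $\ell_1$ bound holds. Therefore, a union bound only over the at most $\log T$ refresh times per $(s,a,h)$ suffices, which is already absorbed into the definition of $\caE$ via \lem{failure_event_tabular}. Finally, taking $a = \pi^*_1(s_1)$ and applying $V_1^k(s_1) = \max_a Q_1^k(s_1,a) \ge Q_1^k(s_1, \pi^*_1(s_1)) \ge Q_1^*(s_1, \pi^*_1(s_1)) = V_1^*(s_1)$ finishes the lemma.
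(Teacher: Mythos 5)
Your proposal is correct and follows essentially the same route as the paper: backward induction over $h$, with the bonus $b^k_h$ dominating the model estimation error via H\"older's inequality and the $\ell_1$ bound guaranteed outside $\caE$. The only cosmetic difference is that the paper first uses the induction hypothesis to replace $\hat{\PP}_h V^k_{h+1}$ by $\hat{\PP}_h V^*_{h+1}$ and then applies H\"older to the deterministic $V^*_{h+1}$ with $\|V^*_{h+1}\|_\infty \le H$, whereas you apply H\"older to the data-dependent $V^k_{h+1}$ and invoke the clipping for $\|V^k_{h+1}\|_\infty \le H$ --- both are valid precisely for the reason you flag, namely that the $\ell_1$ bound on $\hat{\cP}^k_h(\cdot\mid s,a)$ is uniform over all bounded test functions.
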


\begin{proof}
Let the bonus function $b_h$ at the beginning of phase $k$ be $b^k_h$. We prove this lemma by induction on $h = H, H - 1, ..., 1$.

Note that $Q^k_{H}(s, a) = Q^*_H(s, a) = \caR_h(s, a)$ by definition. Thus, $V^k_H(s) \geq V^*_H(s)$ for all $s \in \caS$.

Suppose $V^k_{h'}(s) \geq V^*_{h'}(s)$ for all $h' > h$ and $s \in \caS$. By definition we have 
\begin{align*}
Q^*_h(s, a) - Q^k_h(s, a) & = \dbP_h V^*_{h+1} (s, a) - \hat{\dbP}^k_h V_{h+1} (s, a) + b_h^k(s, a) \\
& \leq \dbP_h V^*_{h+1} (s, a) - \hat{\dbP}^k_h V^*_{h+1} (s, a) + b_h^k(s, a) \\
& \leq \left\|\dbP_h - \hat{\dbP}^k_h\right\|_1 \cdot \left\|V^*_{h+1}\right\|_{\infty} + b_h^k(s, a) \\
& \leq 0.
\end{align*}
The first inequality is due to the induction hypothesis. The last inequality is because $\caE$ does not hold and $\|V^*_{h+1}\|_{\infty} \leq H$.
\end{proof}

Next, we define the ``good set" $G_{k,h}$ \citep{zanette2019tighter} of state-action pairs in phase $k$ and step $h$ to be state-action pairs that have been visited sufficiently often.
\begin{definition}
\label{def:good_set}
The good set $G_{k,h}$ for phase $k$ and step $h$ is defined as 
\begin{align*}
G_{k,h} := \left\{(s, a) \in \caS \times \caA: \frac{1}{4} \sum_{j < k} w^j_h(s, a) \geq \log \frac{2SAH}{\delta} + 1 \right\}.
\end{align*}
\end{definition}
We have the following lemma for $G_{k,h}$.
\begin{lemma}
\label{lem:good_set}
Outside the failure event $\caE$, for any $(k, h) \in [K] \times [H]$ and $(s, a) \in G_{k, h}$, it holds that 
\begin{align*}
n^k_h(s, a) \geq \frac{1}{4} \sum_{j \leq k} w^j_k(s, a).
\end{align*}
\end{lemma}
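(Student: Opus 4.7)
The plan is to deduce the bound directly by combining the negation of failure event $\caE^2$ with the good-set definition, and then to handle the minor discrepancy between the sum up to $k-1$ (which appears in $\caE^2$ and $G_{k,h}$) and the sum up to $k$ (which appears in the claim).

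First I would note that we are working outside $\caE$, hence in particular outside $\caE^2$, so for every tuple $(s,a,k,h)$ the counter satisfies the lower-tail concentration bound
\[
n^k_h(s,a) \;\geq\; \tfrac{1}{2}\sum_{j<k} w^j_h(s,a) \;-\; \log\tfrac{2SAH}{\delta}.
\]
Next I would invoke the membership $(s,a)\in G_{k,h}$, which by \defn{good_set} means $\log(2SAH/\delta) \leq \tfrac{1}{4}\sum_{j<k} w^j_h(s,a) - 1$. Substituting this upper bound on the log term into the inequality above immediately gives
\[
n^k_h(s,a) \;\geq\; \tfrac{1}{2}\sum_{j<k} w^j_h(s,a) \;-\; \tfrac{1}{4}\sum_{j<k} w^j_h(s,a) \;+\; 1 \;=\; \tfrac{1}{4}\sum_{j<k} w^j_h(s,a) \;+\; 1.
\]

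Finally, to pass from the sum over $j<k$ to the sum over $j\leq k$ claimed in the statement, I would use the trivial bound $w^k_h(s,a)\leq 1$ (occupancy measures are probabilities), so that $\tfrac{1}{4}\sum_{j\leq k} w^j_h(s,a) \leq \tfrac{1}{4}\sum_{j<k} w^j_h(s,a) + \tfrac{1}{4}$. Combining with the previous display yields $n^k_h(s,a) \geq \tfrac{1}{4}\sum_{j\leq k} w^j_h(s,a)$, which is exactly what is to be proved.

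There is no real obstacle here; the result is essentially an algebraic combination of two facts already listed (the concentration bound implicit in $\caE^2$ and the threshold in $G_{k,h}$), and the only subtle point is the bookkeeping for the extra term $w^k_h(s,a)$, which is absorbed by the ``$+1$'' slack built into the good-set definition.
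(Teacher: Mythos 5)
Your proposal is correct and follows essentially the same route as the paper: both combine the complement of $\caE^2$ with the threshold in the definition of $G_{k,h}$ to get $n^k_h(s,a) \geq \tfrac{1}{4}\sum_{j<k} w^j_h(s,a) + 1$, and then absorb the extra term $\tfrac{1}{4}w^k_h(s,a) \leq \tfrac{1}{4}$ into the $+1$ slack. The only difference is that you make the last bookkeeping step explicit, which the paper leaves implicit.
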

\begin{proof}
Outside $\caE$ we have 
\begin{align*}
n^k_h(s, a) & \geq \frac{1}{2} \sum_{j<k} w^j_h(s, a) - \log \frac{2S A H}{\delta} \\
& = \frac{1}{4} \sum_{j<k} w^j_h(s, a) +  \frac{1}{4} \sum_{j<k} w^j_h(s, a) - \log \frac{2S A H}{\delta} \\
& \geq \frac{1}{4} \sum_{j<k} w^j_h(s, a) + 1 \\
& \geq \frac{1}{4} \sum_{j \leq k} w^j_h(s, a).
\end{align*}
The second inequality is due to the definition of $G_{k,h}$.
\end{proof}

The state-action pairs outside $G_{k,h}$ contributes little to the regret.

\begin{lemma}
\label{lem:minimal_contribution}
Define $\bar{L} := \log (2SAH / \delta)$. There exists constant $c_2$ such that
\begin{align*}
\sum_{k=1}^K \sum_{h=1}^H \sum_{(s, a) \notin G_{k,h}} w^k_h(s, a) \leq c_2 SAH \bar{L}.
\end{align*}
\end{lemma}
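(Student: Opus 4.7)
}

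The plan is to fix a triple $(s,a,h)$, bound $\sum_{k : (s,a)\notin G_{k,h}} w_h^k(s,a)$ by a quantity of order $\bar L$, and then sum over the $SAH$ choices of $(s,a,h)$. The key observation is that the sequence of partial sums $\sigma_k := \sum_{j<k} w_h^j(s,a)$ is nondecreasing in $k$, so the set of phases $k$ for which $(s,a)\notin G_{k,h}$, i.e.\ for which $\sigma_k < 4(\bar L+1)$, forms an initial segment $\{1,2,\dots,k^*\}$ of $[K]$, where $k^*$ is the largest index with $\sigma_{k^*}<4(\bar L+1)$ (and $k^*=0$ if none exists).

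Given this monotonicity, I would write
$$\sum_{k \le k^*} w_h^k(s,a) \;=\; \sigma_{k^*} + w_h^{k^*}(s,a) \;\le\; 4(\bar L+1) + 1,$$
where the first equality just rearranges the partial sum and the inequality uses the definition of $k^*$ together with the trivial bound $w_h^{k^*}(s,a)\le 1$ (it is a visiting probability). In particular, the single ``boundary'' phase $k^*$ is the only place where $\sigma$ might overshoot the threshold $4(\bar L+1)$, and it overshoots by at most~$1$.

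Summing over all $(s,a,h)\in \caS\times \caA\times[H]$, the total is at most $SAH\bigl(4(\bar L+1)+1\bigr) = \cO(SAH\bar L)$, which gives the claimed bound with a suitable absolute constant $c_2$. The argument does not rely on being outside the failure event $\caE$, since it uses only deterministic properties of the occupancy measures $w_h^k$ and the definition of $G_{k,h}$. The whole proof is essentially a telescoping/first-passage observation, so I do not anticipate any real obstacle; the only thing to be slightly careful about is the off-by-one in the definition of $G_{k,h}$ (the sum is over $j<k$, so one must include the boundary phase $k^*$ itself in the bound, which is exactly what the $+1$ term accounts for).
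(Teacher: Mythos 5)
Your proposal is correct and follows essentially the same route as the paper: both fix $(s,a,h)$, exchange the order of summation, and bound the contribution of the phases with $(s,a)\notin G_{k,h}$ by the good-set threshold plus a boundary term of at most $1$, then multiply by the $SAH$ triples; your version simply makes the monotonicity/initial-segment and the $4(\bar L+1)+1$ accounting explicit (the paper absorbs the factor $4$ into the constant $c_2$). No gap.
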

\begin{proof}
Observe that 
\begin{align*}
\sum_{k=1}^K \sum_{h=1}^H \sum_{(s, a) \notin G_{k,h}} w^k_h(s, a) & = \sum_{s, a} \sum_{h=1}^H \sum_{k=1}^K w^k_h(s, a) \bm{1}\{(s, a) \notin G_{k,h}\} \\
& \leq \sum_{s, a} \sum_{h=1}^H \left(\log \frac{2S A H}{\delta} + 1\right) \\
& \leq c_2 SAH\bar{L}.
\end{align*}
\end{proof}

\begin{lemma}[Rephrased from Lemma 13 of \citet{zanette2019tighter}]
\label{lem:aux_logt}
For any $h \in [H]$,
\begin{align*}
\sum_{k=1}^K \sum_{(s, a) \in G_{k,h}} \frac{w^k_h(s, a)}{\max(1, n^k_h(s, a))} = \caO\left(SA \log T\right).
\end{align*}
\end{lemma}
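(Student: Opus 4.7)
The plan is to use \lem{good_set} to replace the empirical counter $n^k_h(s,a)$ by the cumulative visitation mass $W^k_h(s,a) := \sum_{j \leq k} w^j_h(s,a)$, reducing the left-hand side to a standard ``logarithmic potential'' sum that can be bounded pair-by-pair and then aggregated over $\caS \times \caA$.

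First I would work outside the failure event $\caE$, so that \lem{good_set} applies, and verify that on $G_{k,h}$ the $\max(1,\cdot)$ in the denominator is inactive. Indeed, the defining inequality of $G_{k,h}$ forces $\sum_{j<k} w^j_h(s,a) \geq 4\log(2SAH/\delta) + 4 \geq 4$, and \lem{good_set} then gives $n^k_h(s,a) \geq \frac{1}{4} W^k_h(s,a) \geq 1$. Consequently, for every $(k,s,a)$ contributing to the sum,
$$\frac{w^k_h(s,a)}{\max(1,\,n^k_h(s,a))} \;\leq\; \frac{4\,w^k_h(s,a)}{W^k_h(s,a)}.$$

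Next, fix $(s,a)$ and let $k_0(s,a)$ be the smallest phase index for which $(s,a) \in G_{k,h}$; such an index exists and the pair remains in $G_{k,h}$ for every $k \geq k_0(s,a)$, because $W^k_h(s,a)$ is non-decreasing in $k$. Applying the elementary inequality $x/(1+x) \leq \log(1+x)$ with $x = w^k_h(s,a)/W^{k-1}_h(s,a) \geq 0$ gives
$$\frac{w^k_h(s,a)}{W^k_h(s,a)} \;\leq\; \log\frac{W^k_h(s,a)}{W^{k-1}_h(s,a)}.$$
Telescoping over $k \in [k_0(s,a), K]$ yields $\sum_{k \geq k_0(s,a)} w^k_h(s,a)/W^k_h(s,a) \leq \log\!\left(W^K_h(s,a)/W^{k_0(s,a)-1}_h(s,a)\right) \leq \log K \leq \log T$, using $W^K_h(s,a) \leq K \leq T$ and $W^{k_0(s,a)-1}_h(s,a) \geq 1$.

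Finally, summing the per-pair $\caO(\log T)$ bound over the at most $SA$ pairs in $\caS \times \caA$ produces the claimed $\caO(SA\log T)$ rate. Conceptually the proof is just the classical ``log-potential'' argument; the only place any algorithm-specific structure enters is the counter lower bound supplied by \lem{good_set}, and that is the main preparatory step. No quantum-specific considerations arise here --- everything reduces to a standard telescoping computation, so I do not expect a serious obstacle beyond verifying that the $G_{k,h}$ condition supplies a large enough initial $W^{k_0-1}_h(s,a)$ for the logarithmic bound to be clean.
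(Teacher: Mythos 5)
Your argument is correct: the reduction $\max(1,n^k_h)\ge \tfrac14\sum_{j\le k}w^j_h$ via \lem{good_set} on the good set, followed by the telescoping bound $w^k_h/W^k_h\le \log(W^k_h/W^{k-1}_h)$ and the lower bound $W^{k_0-1}_h\ge 1$ guaranteed by the $G_{k,h}$ threshold, gives exactly the $\caO(SA\log T)$ rate. The paper itself supplies no proof for this lemma (it is invoked as a rephrasing of Lemma 13 of \citet{zanette2019tighter}), and your derivation is essentially the standard log-potential argument underlying that cited result, so there is no substantive divergence to report.
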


\begin{proof}[Proof of \thm{regret_tabular}]
Outside the failure event $\caE$, the surrogate regret can be decomposed as 
\begin{align*}
\overline{\Reg}(K) & = \sum_{k=1}^K V^*_1(s_1) - V^{\pi^k}_1(s_1)  \leq \sum_{k=1}^K V^k_1(s_1) - V^{\pi^k}_1(s_1) \\
& = \sum_{k=1}^K \sum_{h=1}^H \sum_{s, a} w^k_h (s, a) \cdot \left(\big(\hat{\dbP}_h - \dbP_h\big) V^k_{h+1} (s, a) + b^k_h(s, a)\right) \\
& \leq \sum_{k=1}^K \sum_{h=1}^H \sum_{s, a} w^k_h (s, a) b^k_h(s, a).
\end{align*}
The last inequality is by the definition of $\caE$ and $b^k_h$.

Using the good set $G_{k, h}$, we further bound the surrogate regret as 
\begin{align*}
\overline{\Reg}(K) & \lesssim \sum_{k=1}^K \sum_{h=1}^H \sum_{s, a} w^k_h (s, a) b^k_h(s, a) \\
& = \sum_{k=1}^K \sum_{h=1}^H \sum_{(s, a) \in G_{k,h}} w^k_h (s, a) b^k_h(s, a) + \sum_{k=1}^K \sum_{h=1}^H \sum_{(s, a) \notin G_{k,h}} w^k_h (s, a) b^k_h(s, a) \\
& \leq \sum_{k=1}^K \sum_{h=1}^H \sum_{(s, a) \in G_{k,h}} w^k_h (s, a) b^k_h(s, a) + 2c_2 H^2 SA \bar{L}.
\end{align*}
The last inequality is by \lem{minimal_contribution}.

The remaining term can be bounded by 
\begin{align*}
\sum_{k=1}^K \sum_{h=1}^H \sum_{(s, a) \in G_{k,h}} w^k_h (s, a) b^k_h(s, a) & \leq \sum_{k=1}^K \sum_{h=1}^H \sum_{(s, a) \in G_{k,h}} \frac{2c_1 HSL \cdot w^k_h (s, a)}{\max(1, n^k_h(s, a))} + 2H^2SA \\
& = 2c_1 HSL \sum_{h=1}^H \sum_{(s, a) \in G_{k,h}} \sum_{k=1}^K  \frac{w^k_h(s, a)}{\max(1, n^k_h(s, a))} \bm{1}\{(s, a) \in G_{k,h}\} + 2H^2SA \\
& \lesssim HSL \sum_{h=1}^H SA \log(T) + H^2SA\\
& = H^2S^2 A L \log(T) + H^2SA.
\end{align*}
The second inequality is by \lem{aux_logt}.

Now we come to our final regret bound 
\begin{align*}
\Reg(T) & \leq H \cdot \overline{\Reg}(K) = \caO\left(H^3 S^2 A L \log(T) + H^2 SA \bar{L} + H^3 SA\right).
\end{align*}
\end{proof}

\section{Proofs for Linear Mixture MDPs} \label{append:linear:mixture}

\subsection{Total Number of Phases}

\begin{lemma}[Phase Number] \label{lem:phase:size}
    The number of phases $K$ satisfies
    \$
    K \le \cO\Big( dH \cdot \log\big( 1 + \frac{T^3}{ d}\big) \Big).
    \$
\end{lemma}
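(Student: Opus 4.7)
The plan is a standard elliptical-potential / determinant-doubling argument, adapted to the weighted regression setting. First I would use the lazy-update rule described at the start of \sec{alg:linear} (the ``doubling trick in the determinant of the covariance matrix''): a new phase $k+1$ begins only when, for some $h\in[H]$, the determinant $\det(\Lambda_h)$ has doubled compared to its value at the start of the current phase. This implies
\[
K - 1 \;\le\; \sum_{h=1}^{H} N_h,
\qquad
N_h \;:=\; \log_2\!\Big(\det(\Lambda_h^{K+1})/\det(\Lambda_h^{1})\Big),
\]
since within the updates of a single coordinate $h$ the log-determinant is monotone nondecreasing and each phase costs at least $\log 2$ worth of increase. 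Using $\lambda=1$ gives $\Lambda_h^1 = I$, so $N_h = \log_2 \det(\Lambda_h^{K+1})$.

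Next I would upper bound $\det(\Lambda_h^{K+1})$ by AM–GM on the eigenvalues:
\[
\det(\Lambda_h^{K+1}) \;\le\; \big(\operatorname{tr}(\Lambda_h^{K+1})/d\big)^{d}
\;=\;\Big(1 + \tfrac{1}{d}\sum_{\tau=1}^{K} \|\hat\phi_h^\tau\|_2^{2}/w_\tau^{2}\Big)^{d}.
\]
For the numerator I would invoke the bounded-feature assumption \eqref{eq:bounded:feature} together with the $O(1)$ estimation accuracy guaranteed by \alg{estimate:feature}, yielding $\|\hat\phi_h^\tau\|_2 = O(1)$. For the denominator I would establish the crude lower bound $w_\tau \ge 1/T$. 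Plugging in and using $K\le T$ yields $\sum_\tau \|\hat\phi_h^\tau\|_2^{2}/w_\tau^{2} = O(T^3)$, hence $\log\det(\Lambda_h^{K+1}) = O\!\big(d\log(1+T^3/d)\big)$, and summing over $h$ gives the claim.

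The main obstacle is the lower bound $w_\tau \ge 1/T$, since $w_\tau = \max_h \|\hat\phi_h^\tau\|_{(\Lambda_h^\tau)^{-1}}$ is set in Line~\ref{line:w} from the output of the while-loop in \alg{estimate:feature}. To handle it, I would argue that the loop terminates in $m = O(\log T)$ iterations: once the target accuracy $2^{-m}$ drops below $1/\mathrm{poly}(T)$, the bound $\Lambda_h^\tau \succeq I$ (so $\|\cdot\|_{(\Lambda_h^\tau)^{-1}} \le \|\cdot\|_2$) together with the triangle inequality forces some $\|\hat\phi_h^\tau\|_{(\Lambda_h^\tau)^{-1}}$ to exceed $2^{2-m}$, breaking the while condition; thus $w_\tau \ge 2^{2-m} = \Omega(1/T)$. (This is also where the cubic dependence $T^3$ in the statement comes from—linear in the number $T$ of summands times $T^2$ from each $1/w_\tau^2$ factor.)

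Combining the three bounds gives
\[
K \;\le\; 1 + \sum_{h=1}^{H} N_h \;\le\; \cO\!\big(dH \cdot \log(1+T^3/d)\big),
\]
which is exactly the claim. Everything else (the AM–GM step, the monotonicity of $\det$ under rank-one weighted updates, and the conversion from doublings to the log-ratio) is routine given the setup of the weighted regression in \eqref{eq:def:bar:theta}.
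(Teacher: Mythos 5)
Your overall skeleton (per-phase doubling of a determinant, then AM--GM on the trace, then a lower bound on $w_\tau$, then $K\le \sum_h \log_2\det(\Lambda_h^{K+1})$) is the same as the paper's, and the first two steps are essentially fine up to cosmetic points: the doubling is not a stopping rule of \alg{quantum:vtr} but an automatic identity coming from the weight choice in Line~\ref{line:w}, namely $\det(\Lambda_{h_k}^{k+1})=\det(\Lambda_{h_k}^{k})\bigl(1+\|\hat\phi_{h_k}^k\|_{(\Lambda_{h_k}^k)^{-1}}^2/w_k^2\bigr)=2\det(\Lambda_{h_k}^k)$ for the maximizing step $h_k$, which (together with monotonicity of the other determinants) gives exactly your inequality; also $\|\hat\phi_h^\tau\|_2\le H$ rather than $\cO(1)$, since $V_{h+1}^\tau$ ranges in $[0,H]$ and \lem{quantum:mean} is invoked with $C=H$ --- a harmless factor inside the logarithm.

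The genuine gap is your justification of $w_\tau\ge\Omega(1/T)$. You claim the while-loop of \alg{estimate:feature} must break once $2^{-m}\le 1/\mathrm{poly}(T)$ because $\Lambda_h\succeq I$ and the triangle inequality ``force'' some $\|\hat\phi_{h}\|_{(\Lambda_h)^{-1}}$ above $2^{2-m}$. This does not work: $\Lambda_h\succeq I$ gives $\|\cdot\|_{(\Lambda_h)^{-1}}\le\|\cdot\|_2$, which is the wrong direction for making the weighted norm large, and nothing forces termination at $m=\cO(\log T)$. Indeed, the triangle inequality gives $\|\hat\phi_{h,m-1}\|_{(\Lambda_h)^{-1}}\le\|\phi_h^\tau\|_{(\Lambda_h^\tau)^{-1}}+2^{-(m-1)}$, so if every $\|\phi_h^\tau\|_{(\Lambda_h^\tau)^{-1}}$ is below $2^{-m}$ the while condition keeps holding, and there is no a priori $\Omega(1/T)$ lower bound on $\max_h\|\phi_h^\tau\|_{(\Lambda_h^\tau)^{-1}}$: for instance $\phi_H^\tau=0$ because $V_{H+1}^\tau\equiv 0$, and at other steps it can be arbitrarily small once past data cover the relevant directions (a bound via $\|\phi_h^\tau\|_2/\sqrt{\lambda_{\max}(\Lambda_h^\tau)}$ would in any case require a lower bound on $\|\phi_h^\tau\|_2$ that is unavailable). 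The paper closes this step by an episode-budget argument instead: in any completed phase $\tau$, \alg{estimate:target} consumes $\Theta(H^2\iota/w_\tau)$ episodes (\lem{quantum:mean} run to accuracy $w_\tau$), and since the whole run uses at most $T$ episodes this forces $H/w_\tau\le\cO(T)$, hence $w_\tau\ge\Omega(1/T)$; a similar budget argument applied to the cost $\propto 2^m$ of iteration $m$ of \alg{estimate:feature} would also cap $m$ at $\cO(\log T)$, but some budget-type argument is needed --- your termination claim as stated is invalid. With that substitution, the rest of your proof matches the paper's.
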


\begin{proof}[Proof of \lem{phase:size}]
    For any $k \in [K]$, there exists an $h_k \in [H]$ such that 
    \# \label{eq:990}
    \|\hat{\phi}_{h_k}^k\|_{(\Lambda_{h_k}^k)^{-1}} = \max_{h} \|\hat{\phi}_{h}^k\|_{(\Lambda_{h}^k)^{-1}} .
    \#
    By \eqref{eq:def:bar:theta}, we have
    \# \label{eq:991}
    \Lambda_{h_k}^{k+1} = \lambda I_d + \sum_{\tau = 1}^k \frac{\hat{\phi}_{h_k}^\tau (\hat{\phi}_{h_k}^\tau)^\top}{w_\tau^2} = \Lambda_{h_k}^{k} +  \frac{\hat{\phi}_{h_k}^k (\hat{\phi}_{h_k}^k)^\top}{w_k^2},
    \#
    which further implies that 
    \# \label{eq:992}
    \det(\Lambda_{h_k}^{k+1}) &= \det \Big( \Lambda_{h_k}^{k} +  \frac{\hat{\phi}_{h_k}^k (\hat{\phi}_{h_k}^k)^\top}{w_k^2} \Big) \notag \\
    & = \det\Big( (\Lambda_{h_k}^{k})^{1/2} \Big( I_d + \frac{ (\Lambda_{h_k}^{k})^{-1/2} \hat{\phi}_{h_k}^k (\hat{\phi}_{h_k}^k)^\top (\Lambda_{h_k}^{k})^{-1/2} }{w_k^2}  \Big) (\Lambda_{h_k}^{k})^{1/2}  \Big) \notag \\
    & = \det(\Lambda_{h_k}^{k}) \cdot \det \Big( I_d + \frac{ (\Lambda_{h_k}^{k})^{-1/2} \hat{\phi}_{h_k}^k (\hat{\phi}_{h_k}^k)^\top (\Lambda_{h_k}^{k})^{-1/2} }{w_k^2} \Big). 
    \#
    By the fact that $\det(I_d + v v^\top) = 1 + \|v\|_2^2$ for any $v \in \RR^d$, \eqref{eq:992} yields
    \# \label{eq:993}
    \det(\Lambda_{h_k}^{k+1}) =  \det(\Lambda_{h_k}^{k}) \cdot \Big( 1 +  \frac{ \big\|(\Lambda_{h_k}^{k})^{-1/2} \hat{\phi}_{h_k}^k \big\|_2^2}{w_k^2} \Big) = \det(\Lambda_{h_k}^{k}) \cdot \Big( 1 +  \frac{ \big\| \hat{\phi}_{h_k}^k \big\|_{(\Lambda_{h_k}^{k})^{-1}}^2}{w_k^2} \Big)
    \#
    Recall that $w_k = \max_{h} \|\hat{\phi}_{h}^k\|_{(\Lambda_{h}^k)^{-1}}$. Together with \eqref{eq:990} and \eqref{eq:993}, we obtain 
    \# \label{eq:994}
    \det(\Lambda_{h_k}^{k+1}) =  2 \det(\Lambda_{h_k}^{k}).
    \#
    Meanwhile, by the same argument of \eqref{eq:991}, we know $\Lambda_h^{k} \preceq \Lambda_h^{k+1}$, which implies that
    \# \label{eq:995}
    \det(\Lambda_h^{k+1}) \ge \det(\Lambda_h^k), \quad \forall h \in [H].
    \#
    Combining \eqref{eq:994} and \eqref{eq:995}, we have for any $k \in [K]$
    \# \label{eq:996}
    \prod_{h = 1}^H \det(\Lambda_{h}^{k+1}) \ge 2 \prod_{h = 1}^H \det(\Lambda_h^k) .
    \#
    Applying \eqref{eq:996} to all $k \in [K ]$, we have
    \# \label{eq:997}
    \prod_{h = 1}^H \det(\Lambda_{h}^{K + 1}) \ge 2^{K} \prod_{h = 1}^H \det(\Lambda_h^1) = 2^{K } \lambda^{dH},
    \#
    where the last equality uses the fact that $\Lambda_h^1 = \lambda \cdot I_d$ for all $h \in [H]$. On the other hand, we have
    \# \label{eq:998}
    \tr(\Lambda_h^{K + 1}) = \tr(\lambda \cdot I_d) + \sum_{\tau = 1}^K \tr\Big( \frac{\hat{\phi}_h^\tau (\hat{\phi}_h^\tau)^\top}{w_\tau^2} \Big) = \lambda d + \sum_{\tau = 1}^K \frac{\|\hat{\phi}_h^\tau\|_2^2}{w_\tau^2} \le \lambda d + \sum_{\tau = 1}^K \frac{H^2}{w_\tau^2},
    \#
    where the last inequality uses the fact that $\|\hat{\phi}_h^\tau\|_2 \le H$ for all $(\tau, h) \in [K] \times [H]$. 
    Since $\Lambda_h^{K+1}$ is positive definite for any $h \in [H]$, we further have
    \# \label{eq:999}
    \det(\Lambda_h^{K+1}) \le \Big( \frac{\tr(\Lambda_h^{K+1})}{d} \Big)^d \le \Big( \lambda +  \sum_{\tau = 1}^K \frac{H^2}{d \cdot w_\tau^2}  \Big)^d, \quad \forall h \in [H],
    \#
    where the last inequality follows from \eqref{eq:998}. Meanwhile, by the facts that (i) \alg{estimate:target} uses $(C^\dagger H^2 \cdot \iota /w_{\tau})$ trajectories in the $\tau$-th phase (cf. \lem{quantum:mean}), where $C^\dagger$ is an absolute constant; and (ii) the total number of trajectories is $T$, we have $H/w_{\tau} \le \cO(T)$ for all $\tau \in [K]$. Together with \eqref{eq:997}, \eqref{eq:999}, $\lambda = 1$, and $K \le T$, we obtain that
    \$
    K  \le \cO\Big( dH \cdot \log\big( 1 + \frac{T^3}{ d}\big) \Big),
    \$
    which finishes the proof of \lem{phase:size}.
\end{proof}

\subsection{Optimism}

\begin{lemma}[Optimism] \label{lem:confidence:set}
    We define the event $\cE^\dagger$ as 
    \$
    \cE^\dagger = \{\theta_h \in \cC_h^k, \, \forall (k, h) \in [K] \times [H]\},
    \$ 
    where $\cC_h^k = \{\theta \in \RR^d: \|\theta - \bar{\theta}_h^k\|_{\Lambda_h^k} \le \beta_k \}$ and $\beta_k = 1  + 2\sqrt{dk}$. Then we have
    \$
    \PP(\cE^\dagger) \ge 1 - \delta.
    \$
\end{lemma}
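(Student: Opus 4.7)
The plan is to establish $\|\bar{\theta}_h^k - \theta_h\|_{\Lambda_h^k} \le \beta_k$ for every $(k, h)$ deterministically on a high-probability ``good event'' under which every call to the quantum multivariate mean-estimation subroutine (\lem{quantum:mean}) made by \alg{estimate:feature} and \alg{estimate:target} returns an estimate within its designed accuracy. The number of such calls over all $K$ phases is at most $\cO(HKM)$, where $M = \cO(\log(1 + T^3/d))$ upper bounds the number of iterations of the while-loop inside \alg{estimate:feature}; this follows from $w_\tau = \Omega(H/T)$, the same estimate used in the proof of \lem{phase:size}. Setting the confidence parameter of each call to $\delta / (C_0 HKM)$ for a suitable absolute constant $C_0$ and applying a union bound over all calls produces the overall failure probability $\delta$.

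Conditioned on the good event, I would derive two estimation bounds that drive the analysis. First, combining \alg{estimate:target} with identity \eqref{eq:51} yields $|y_h^\tau - (\phi_h^\tau)^\top \theta_h| \le w_\tau$. Second, when \alg{estimate:feature} terminates at iteration $m^\star$, there exists some $h$ with $\|\hat{\phi}_h^\tau\|_{(\Lambda_h^\tau)^{-1}} \ge 2^{2 - m^\star}$, while the last computed quantum estimate satisfies $\|\hat{\phi}_h^\tau - \phi_h^\tau\|_2 \le 2^{-(m^\star - 1)} = 2 \cdot 2^{-m^\star}$; since $w_\tau = \max_h \|\hat{\phi}_h^\tau\|_{(\Lambda_h^\tau)^{-1}}$ by Line~\ref{line:w}, this implies $\|\hat{\phi}_h^\tau - \phi_h^\tau\|_2 \le w_\tau / 2$ for every $h$. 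Together with $\|\theta_h\|_2 \le 1$, the residual $\xi_h^\tau := y_h^\tau - (\hat{\phi}_h^\tau)^\top \theta_h$ then obeys $|\xi_h^\tau| \le \tfrac{3}{2} w_\tau$.

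Next I would manipulate the closed-form expression \eqref{eq:def:bar:theta} into the identity
\[
\Lambda_h^k (\bar{\theta}_h^k - \theta_h) = -\lambda \theta_h + \sum_{\tau=1}^{k-1} \frac{\hat{\phi}_h^\tau \xi_h^\tau}{w_\tau^2},
\]
and take the $(\Lambda_h^k)^{-1}$-norm of both sides. The regularisation contributes at most $\sqrt{\lambda}\,\|\theta_h\|_2 \le 1$. For the noise term, I apply Cauchy--Schwarz in the index $\tau$: for any $v$ with $\|v\|_{\Lambda_h^k} \le 1$,
\[
\left|\sum_\tau \frac{v^\top \hat{\phi}_h^\tau \xi_h^\tau}{w_\tau^2}\right| \le \sqrt{\sum_\tau \frac{(v^\top \hat{\phi}_h^\tau)^2}{w_\tau^2}} \cdot \sqrt{\sum_\tau \frac{(\xi_h^\tau)^2}{w_\tau^2}} \le 1 \cdot \tfrac{3}{2}\sqrt{k-1},
\]
using $\sum_\tau \hat{\phi}_h^\tau (\hat{\phi}_h^\tau)^\top / w_\tau^2 \preceq \Lambda_h^k$ for the first factor and $|\xi_h^\tau| \le \tfrac{3}{2} w_\tau$ for the second. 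Combining the two contributions and relaxing $\sqrt{k} \le \sqrt{dk}$ yields $\|\bar{\theta}_h^k - \theta_h\|_{\Lambda_h^k} \le 1 + 2\sqrt{dk} = \beta_k$, completing the argument.

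The main subtlety lies in the feature-estimation step: since the termination of \alg{estimate:feature} depends on the (randomly computed) estimates $\hat{\phi}_h^\tau$, the bound $\|\hat{\phi}_h^\tau - \phi_h^\tau\|_2 \le w_\tau/2$ is not the output of a single quantum call but a consequence of the stopping rule combined with the good event. The remedy is exactly the a-priori cap $M$ and the union bound introduced in the first paragraph. Classical self-normalised martingale concentration is not required here because the quantum errors are deterministically bounded on the good event rather than martingale differences.
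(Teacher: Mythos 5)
Your proposal is correct, and its skeleton is the same as the paper's: condition on a high-probability event under which every quantum mean-estimation call in \alg{estimate:feature} and \alg{estimate:target} meets its accuracy target (this is exactly what \lem{estimate:feature} and \lem{estimate:target} provide, each with budget $\delta/2$), then bound $\|\bar{\theta}_h^k-\theta_h\|_{\Lambda_h^k}$ deterministically from the identity $\Lambda_h^k(\bar{\theta}_h^k-\theta_h)=-\lambda\theta_h+\sum_\tau \hat{\phi}_h^\tau \xi_h^\tau/w_\tau^2$ together with $|\xi_h^\tau|\le \cO(w_\tau)$. The one genuine difference is how you bound the weighted noise term: the paper forms the matrices $\mathbf{M},\mathbf{N}$ and uses the trace bound $\|\mathbf{M}^\top(\Lambda_h^k)^{-1}\mathbf{M}\|_2\le d$, which is where its $2\sqrt{dk}$ comes from, whereas your Cauchy--Schwarz in $\tau$ implicitly uses the sharper operator-norm fact $\mathbf{M}\mathbf{M}^\top=\Lambda_h^k-\lambda I_d\preceq\Lambda_h^k$, yielding $\tfrac32\sqrt{k-1}$ with no $\sqrt{d}$; this shows the radius could be taken as $1+2\sqrt{k}$, and relaxing to $\beta_k=1+2\sqrt{dk}$ of course still gives the stated lemma. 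Two minor imprecisions, neither fatal: (i) in the feature step, the stopping test of \alg{estimate:feature} is on the iterate $\hat{\phi}_{h,m^\star-1}$, while the returned estimator analyzed in the paper is $\hat{\phi}_{h,m^\star}$, so your claim $\|\hat{\phi}_h^\tau-\phi_h^\tau\|_2\le w_\tau/2$ needs either to fix the output as the tested iterate or to insert the triangle-inequality step of \lem{estimate:feature}, which only degrades the bound to $\le w_\tau$ and changes your constant from $\tfrac32$ to $2$, still within $\beta_k$; (ii) your a-priori cap $M=\cO(\log(1+T^3/d))$ on the while-loop length (to set the union bound) is justified by the episode budget exactly as in \lem{phase:size}, which is the same informal accounting the paper uses inside \lem{estimate:feature}, so it is acceptable but should be stated as a budget argument rather than as a property of $w_\tau$ alone.
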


\begin{proof}[Proof of \lem{confidence:set}]
    Fix some $(k, h) \in [K] \times [H]$. By the definition of $\bar{\theta}_h$ in \eqref{eq:def:bar:theta}, we have
    \$
    \bar{\theta}_h^k - \theta_h &= (\Lambda_h^k)^{-1} \Big( \sum_{\tau = 1}^{k-1} \frac{\hat{\phi}_h^\tau \cdot y_h^\tau }{w_\tau^2} \Big) - \theta_h \\
    & = (\Lambda_h^k)^{-1} \Big( \sum_{\tau = 1}^{k-1} \frac{\hat{\phi}_h^\tau \cdot (y_h^\tau - (\hat{\phi}_h^\tau)^\top \theta_h ) }{w_\tau^2} \Big) - \lambda (\Lambda_h^k)^{-1} \theta_h.
    \$
    For any $x \in \RR^d$, by the triangle inequality we have
    \# \label{eq:981}
    |x^\top (\bar{\theta}_h^k - \theta_h) | &\le \Big| x^\top (\Lambda_h^k)^{-1} \Big( \sum_{\tau = 1}^{k-1} \frac{\hat{\phi}_h^\tau \cdot (y_h^\tau - (\hat{\phi}_h^\tau)^\top \theta_h ) }{w_\tau^2} \Big) \Big| + \lambda \left|x^\top (\Lambda_h^k)^{-1} \theta_h \right| \notag\\
    & \le \|x\|_{(\Lambda_h^k)^{-1}} \cdot \Big\| \sum_{\tau = 1}^{k-1} \frac{\hat{\phi}_h^\tau \cdot (y_h^\tau - (\hat{\phi}_h^\tau)^\top \theta_h ) }{w_\tau^2} \Big\|_{(\Lambda_h^k)^{-1}} + \lambda \|x\|_{(\Lambda_h^k)^{-1}} \cdot \|\theta_h\|_{(\Lambda_h^k)^{-1}},
    \#
    where the last inequality is obtained by Cauchy-Schwarz inequality. Let $x = \Lambda_h^k (\bar{\theta}_h^k - \theta_h)$, and \eqref{eq:981} gives that
    \# \label{eq:982}
    \| \bar{\theta}_h^k - \theta_h\|_{\Lambda_h^k} \le  \underbrace{\Big\| \sum_{\tau = 1}^{k-1} \frac{\hat{\phi}_h^\tau \cdot (y_h^\tau - (\hat{\phi}_h^\tau)^\top \theta_h ) }{w_\tau^2} \Big\|_{(\Lambda_h^k)^{-1}}}_{\displaystyle \mathrm{(i)} } + \underbrace{\lambda \cdot \|\theta_h\|_{(\Lambda_h^k)^{-1}}}_{\displaystyle \mathrm{(ii)}} .
    \#
    \noindent{\textbf{Term (i):}} 
    We define 
    \# \label{eq:983}
    \mathbf{M} = \Big( \frac{\hat{\phi}_h^1}{w_1}, \cdots,  \frac{\hat{\phi}_h^{k-1}}{w_{k-1}} \Big) \in \RR^{d \times (k-1)}, \quad  \mathbf{N} = \Big( \frac{y_h^1 - (\hat{\phi}_h^1)^\top \theta_h }{w_1}, \cdots, \frac{y_h^{k-1} - (\hat{\phi}_h^{k-1})^\top \theta_h }{w_{k-1}} \Big)^\top \in \RR^{(k-1) \times 1}.
    \#
    With the notations in \eqref{eq:983}, we have
    \# \label{eq:984}
    {\displaystyle \mathrm{(i)}} = \| \mathbf{M} \mathbf{N} \|_{(\Lambda_h^k)^{-1}} = \sqrt{\mathbf{N}^\top \mathbf{M}^\top (\Lambda_h^k)^{-1} \mathbf{M} \mathbf{N} } \le \sqrt{\|\mathbf{N}\|_2 \cdot \|\mathbf{M}^\top (\Lambda_h^k)^{-1} \mathbf{M}\|_2 \cdot \|\mathbf{N}\|_2 },
    \#
    where the last inequality follows from Cauchy-Schwarz inequality.
    Note that 
    \# \label{eq:985}
    |y_h^\tau  - (\hat{\phi}_h^\tau)^\top \theta_h |  &\le | y_h^\tau - (\phi_h^\tau)^\top \theta_h| + | (\phi_h^\tau)^\top \theta_h - (\hat{\phi}_h^\tau)^\top \theta_h  | \notag \\
    & \le | y_h^\tau - (\phi_h^\tau)^\top \theta_h| + \|\phi_h^\tau - \hat{\phi}_h^\tau \|_2 \cdot \|\theta_h\|_2 \notag \\
    & \le  w_\tau +  w_\tau  = 2 w_\tau , 
    \#
    where the first inequality follows from the triangle inequality, the second inequality uses the Cauchy-Schwarz inequality, and the third inequality uses \lem{estimate:target} and \lem{estimate:feature}. Combining \eqref{eq:985} the definition of $\mathbf{N}$ in \eqref{eq:983}, we have $\|\mathbf{N}\|_{\infty} \le 2$, which further implies that
    \# \label{eq:986}
    \|\mathbf{N}\|_2 \le \sqrt{(k-1)} \cdot \|\mathbf{N}\|_{\infty} \le 2\sqrt{k} .
    \#
    Meanwhile, we have
    \# \label{eq:987}
    \|\mathbf{M}^\top (\Lambda_h^k)^{-1} \mathbf{M}\|_2 \le \tr\big(\mathbf{M}^\top (\Lambda_h^k)^{-1} \mathbf{M} \big) = \tr \big( (\Lambda_h^k)^{-1} \mathbf{M} \mathbf{M}^\top \big).
    \#
    By the definition of $\mathbf{M}$ in \eqref{eq:983}, we have
    \# \label{eq:988}
    \mathbf{M} \mathbf{M}^\top = \sum_{\tau = 1}^{k - 1} \frac{\hat{\phi}_h^\tau (\hat{\phi}_h^\tau)^\top }{w_\tau^2} = \Lambda_h^k - \lambda \cdot I_d, 
    \#
    where the last equality follows from the definition of $\Lambda_h^k$ in \eqref{eq:def:bar:theta}. Combining \eqref{eq:987} and \eqref{eq:988}, we have
    \# \label{eq:989}
    \|\mathbf{M}^\top (\Lambda_h^k)^{-1} \mathbf{M}\|_2 = \tr\big(I_d - \lambda (\Lambda_h^k)^{-1} \big) \le \tr(I_d) = d.
    \#
    Plugging \eqref{eq:986} and \eqref{eq:989} into \eqref{eq:984}, we obtain
    \# \label{eq:9891}
    {\displaystyle \mathrm{(i)}} \le 2 \sqrt{dk}.
    \#

    \noindent{\textbf{Term (ii):}} By the fact that $ \lambda \cdot I_d \preceq \Lambda_h^k$, we have
    \# \label{eq:9892}
     {\displaystyle \mathrm{(ii)}} \le \lambda \cdot \|\theta_h\|_2/\sqrt{\lambda} \le \sqrt{\lambda} ,
    \#
    where the last inequality uses the fact that $\|\theta\|_2 \le 1$.

    Plugging \eqref{eq:9891} and \eqref{eq:9892} into \eqref{eq:982}, we obtain
    \$
     \| \bar{\theta}_h^k - \theta_h\|_{\Lambda_h^k} \le \sqrt{\lambda}  + 2\sqrt{dk} = \beta_k,
    \$
    which concludes the proof of \lem{confidence:set}.
\end{proof}

\subsection{Feature Estimation}

\begin{lemma}[Estimate Feature] \label{lem:estimate:feature}
    Let $\lambda = 1$. It holds with probability at least $1 - \delta/2$ that 
    \$
    \|\hat{\phi}_h^k - \phi_h^k\|_2 \le \max_{h \in [H]} \|\hat{\phi}_h^k\|_{(\Lambda_h^k)^{-1}} = w_k
    \$
    for all $(k, h) \in [K] \times [H]$. Meanwhile, for each $k \in [K]$, we need at most 
    \$
    \cO\Big(\frac{ \sqrt{d}H^2 \iota}{ \max_{h \in [H]} \|\phi_h^k\|_{(\Lambda_h^k)^{-1}} } \Big)
    \$
    episodes to calculate $\{\hat{\phi}_h^k\}_{h \in [H]}$. Here $\iota = \cO\Big( \log\big( \frac{dH \log(1 + \frac{T^3}{d}) }{\delta} \big) \Big)$.
\end{lemma}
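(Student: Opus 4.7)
The algorithm is a doubling scheme on the target accuracy: at iteration $m$ it invokes \lem{quantum:mean} to produce $\hat{\phi}_{h,m}$ with $\ell_2$-error $2^{-m}$ for each $h\in[H]$, and the while-loop terminates as soon as some estimate's $(\Lambda_h^k)^{-1}$-weighted norm reaches at least twice the current error, i.e., when there exists $h^*$ with $\|\hat{\phi}_{h^*, M-1}\|_{(\Lambda_{h^*}^k)^{-1}} \ge 2^{2-M}$, where $M$ denotes the final value of $m$; the output is $\hat{\phi}_h^k := \hat{\phi}_{h, M-1}$. First I would set up a high-probability success event: since each call to \lem{quantum:mean} spends at least one episode and there are only $T$ episodes in total, a single phase runs the loop at most $\cO(\log T)$ times, and combining this with \lem{phase:size}'s bound $K = \cO(dH\log(1+T^3/d))$ and $H$ per-step estimations, a union bound with per-call failure probability $\delta/\mathrm{poly}$ gives $\iota = \cO(\log(dH\log(1+T^3/d)/\delta))$ and $\|\hat{\phi}_{h,m} - \phi_h^k\|_2 \le 2^{-m}$ simultaneously for every $(k,h,m)$ encountered, with total failure probability at most $\delta/2$.

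\medskip

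Conditioning on this success event, correctness is immediate. By the exit criterion there exists $h^*$ with $\|\hat{\phi}_{h^*, M-1}\|_{(\Lambda_{h^*}^k)^{-1}} \ge 2^{2-M}$, so for every $h\in[H]$
\[
\|\hat{\phi}_{h, M-1} - \phi_h^k\|_2 \le 2^{1-M} = \tfrac{1}{2}\cdot 2^{2-M} \le \tfrac{1}{2}\max_{h'\in[H]}\|\hat{\phi}_{h', M-1}\|_{(\Lambda_{h'}^k)^{-1}} = \tfrac{1}{2}w_k,
\]
which is slightly stronger than the claim.

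\medskip

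For the episode count, at iteration $m$ the algorithm makes $H$ calls to \lem{quantum:mean} on a random variable with $\|\phi_{V_{h+1}^k}\|_2 \le H$ (since $V_{h+1}^k \in [0,H]$, rescaling the bound in \defn{linear:mixture}), each costing $\cO(\sqrt{d}H \cdot 2^m \iota)$ episodes. Summing the geometric series over $m=2,\dots,M$ and multiplying by $H$ gives $\cO(\sqrt{d}H^2 \cdot 2^M \iota)$ episodes in total for the phase. It remains to bound $2^M$. Here I would combine the exit inequality with the error bound via the reverse triangle inequality, using $\Lambda_h^k \succeq \lambda I_d = I_d$ (so $\|v\|_{(\Lambda_h^k)^{-1}} \le \|v\|_2$):
\[
\|\phi_{h^*}^k\|_{(\Lambda_{h^*}^k)^{-1}} \ge \|\hat{\phi}_{h^*, M-1}\|_{(\Lambda_{h^*}^k)^{-1}} - \|\hat{\phi}_{h^*, M-1} - \phi_{h^*}^k\|_2 \ge 2^{2-M} - 2^{1-M} = 2^{1-M},
\]
so $2^M \le 2/\max_h\|\phi_h^k\|_{(\Lambda_h^k)^{-1}}$, which yields exactly the episode bound stated in the lemma.

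\medskip

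The main obstacle is reconciling the $\ell_2$ error guarantee of \lem{quantum:mean} with the $(\Lambda_h^k)^{-1}$-weighted norm appearing in the termination check: this is precisely what forces the choice $\lambda = 1$ in \eqref{eq:def:bar:theta}, so that $\Lambda_h^k \succeq I_d$ and the weighted norm is dominated by the $\ell_2$ norm in the right direction. A secondary concern is termination of the loop itself, but as long as $\max_h\|\phi_h^k\|_{(\Lambda_h^k)^{-1}} > 0$ the same chain of inequalities forces the exit condition once $m$ is large enough, and the degenerate case can be ruled out by a trivial safeguard.
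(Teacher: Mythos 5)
Your overall route is the same as the paper's: invoke \lem{quantum:mean} with $C=H$ at geometrically shrinking accuracies $2^{-m}$, compare $\ell_2$-error against the $(\Lambda_h^k)^{-1}$-norm using $\Lambda_h^k \succeq \lambda I_d = I_d$, and charge the cost of the whole loop to the last (dominant) call. One minor point first: you declare the output to be $\hat{\phi}_{h,M-1}$, whereas the last iterate the algorithm computes (and the one the paper analyzes, $\hat{\phi}_h^k=\hat{\phi}_{h,n_0+1}$ with $n_0=M-1$) is $\hat{\phi}_{h,M}$. Your ``immediate'' correctness step is valid only under your output convention; for the paper's output one needs the extra triangle-inequality chain \eq{224}--\eq{226} relating $\max_h\|\hat{\phi}_{h,M}\|_{(\Lambda_h^k)^{-1}}$ back to the exit condition, which is stated in terms of $\hat{\phi}_{h_0,M-1}$. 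Either convention satisfies the lemma, so this is cosmetic, but you should make it explicit.

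The genuine gap is in the bound on $2^M$. From the exit condition you correctly deduce $\|\phi_{h^*}^k\|_{(\Lambda_{h^*}^k)^{-1}} \ge 2^{2-M}-2^{1-M} = 2^{1-M}$, i.e.\ $2^M \le 2/\|\phi_{h^*}^k\|_{(\Lambda_{h^*}^k)^{-1}}$, but only for the particular $h^*$ that triggered termination. The step ``so $2^M \le 2/\max_h\|\phi_h^k\|_{(\Lambda_h^k)^{-1}}$'' does not follow: since $\max_h\|\phi_h^k\|_{(\Lambda_h^k)^{-1}} \ge \|\phi_{h^*}^k\|_{(\Lambda_{h^*}^k)^{-1}}$, substituting the maximum makes the right-hand side \emph{smaller}, which is the wrong direction, and a priori $h^*$ need not attain the maximum. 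The missing ingredient is the complementary termination argument: the loop only reached iteration $M$ because the while-condition held for \emph{every} $h$ at the penultimate check, so for the maximizing index $h$ one has $\|\hat{\phi}_{h,M-2}\|_{(\Lambda_h^k)^{-1}} < 2^{3-M}$, and together with the error bound $2^{-(M-2)}$ this gives $\max_h\|\phi_h^k\|_{(\Lambda_h^k)^{-1}} \le 2^{3-M}+2^{2-M} = \cO(2^{-M})$, hence $2^M = \cO\big(1/\max_h\|\phi_h^k\|_{(\Lambda_h^k)^{-1}}\big)$. This is exactly what the paper does, in contrapositive form, in \eq{227}--\eq{228}: once $2^{n_0} \ge 3/\|\phi_h^k\|_{(\Lambda_h^k)^{-1}}$ for \emph{any} $h$, the exit condition must fire, so $n_0 \le \big\lceil \log_2\big(3/\max_h\|\phi_h^k\|_{(\Lambda_h^k)^{-1}}\big)\big\rceil$. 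With that argument inserted, your episode count $\cO(\sqrt{d}H^2 2^M \iota)$ yields the stated bound; without it, your derivation only gives a bound in terms of $\|\phi_{h^*}^k\|_{(\Lambda_{h^*}^k)^{-1}}$, which is not what the lemma (or its use in the regret proof) requires.
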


\begin{proof}[Proof of \lem{estimate:feature}]
      Fix $h \in [H]$. In \alg{estimate:feature}, we construct a series of estimators $\{\hat{\phi}_{h, n}^k\}_{n = 1}^{n_0 + 1}$ such that 
      \# \label{eq:222}
      \| \hat{\phi}_{h, n}^k - \phi_h^k \|_2 \le 2^{-n}, \quad \forall n \in [n_0 + 1]
      \#
      until the condition 
      \# \label{eq:223}
      \|\hat{\phi}_{h, n_0}^k\|_{(\Lambda_{h}^k)^{-1}} \ge \frac{1}{ 2^{n_0 - 1}} , 
      \#
      is satisfied for some fixed $h_0 \in [H]$. Then for any $h \in [H]$ we have
      \# \label{eq:224}
      \|\hat{\phi}_{h, n_0 + 1}^k - \phi_h^k \|_2 &\le 2^{-(n_0+1)}  \notag\\
      & \le \|\hat{\phi}_{h_0, n_0 }^k\|_{(\Lambda_{h_0}^k)^{-1}} -  2^{-n_0} - 2^{-(n_0+1)}, 
      \#
      where the first inequality uses \eqref{eq:222} and the second inequality follows from \eqref{eq:223}. By the triangle inequality, we  have
      \# \label{eq:225}
      \|\hat{\phi}_{h_0, n_0 }^k\|_{(\Lambda_{h_0}^k)^{-1}}  &\le \|{\phi}_{h_0}^k\|_{(\Lambda_{h_0}^k)^{-1}} + \|\hat{\phi}_{h_0, n_0 }^k - \phi_{h_0}^k\|_{(\Lambda_{h_0}^k)^{-1}} \notag \\
      & \le \|{\phi}_{h_0}^k\|_{(\Lambda_{h_0}^k)^{-1}} + \|\hat{\phi}_{h_0, n_0 }^k - \phi_{h_0}^k\|_2 \notag \\
      & \le  \|{\phi}_{h_0}^k\|_{(\Lambda_{h_0}^k)^{-1}} +  2^{- n_0},
      \#
      where the second inequality uses the fact that $ I_d \preceq \Lambda_{h_0}^k$, and the last inequality follows from \eqref{eq:222}. Plugging \eqref{eq:225} into \eqref{eq:224}, we have for all $h \in [H]$ that 
      \# \label{eq:226}
      \|\hat{\phi}_{h, n_0 + 1}^k - \phi_h^k \|_2 & \le   \|{\phi}_{h_0}^k\|_{(\Lambda_{h_0}^k)^{-1}} - 2^{-(n_0 + 1)} \notag \\
      & \le   \|\hat{\phi}_{h_0, n_0 + 1}^k\|_{(\Lambda_{h_0}^k)^{-1}} \notag \\
      & \le  \max_{h \in [H]} \|\hat{\phi}_{h, n_0 + 1}^k\|_{(\Lambda_{h}^k)^{-1}}
      \#
      where the second inequality follows from the same argument of \eqref{eq:225}. This finishes the first part of proof.

      Meanwhile, following the similar argument of \eqref{eq:225}, we obtain that 
      \# \label{eq:227}
      \|\hat{\phi}_{h, n_0}^k\|_{(\Lambda_{h}^k)^{-1}} & \ge \|{\phi}_{h}^k\|_{(\Lambda_{h}^k)^{-1}}  -  2^{-  n_0} \ge   2^{ -(n_0 - 1)}, 
      \#
      where the last inequality holds when
      \# \label{eq:228}
      2^{n_0 } \ge \frac{3}{ \|{\phi}_{h}^k\|_{(\Lambda_{h}^k)^{-1}} } .
      \#
      If $n_0$ satisfy \eqref{eq:228} for any $h \in [H]$, \eqref{eq:227} indicates that the condition in \eqref{eq:223} is satisfied and the estimation process is terminated. Therefore, we know that 
      \$
      n_0 \le \Big\lceil \log\Big( \frac{3}{ \max_{h \in [H]}\|{\phi}_{h}^k\|_{(\Lambda_{h}^k)^{-1}} } \Big) \Big\rceil.
      \$
      Let $\iota' = \log(dH K/\delta)$, by the property of quantum multivariate mean estimation (\lem{quantum:mean} with $C = H$), we need at most 
      \# \label{eq:229}
      H \cdot \cO\Big( \sqrt{d} H \iota' \cdot \sum_{n = 1}^{n_0 + 1} 2^n \Big) \le {\cO} \Big( \frac{\sqrt{d} H^2 \iota' }{ \max_{h \in [H]} \|{\phi}_{h}^k\|_{(\Lambda_{h}^k)^{-1}} } \Big)
      \#
      episodes to estimate features. By \lem{phase:size} and the definitions of $\iota'$ and $\iota$, we have $\iota' \le \cO(\iota)$. Together with \eqref{eq:229}, we conclude the proof of \lem{estimate:feature}.
\end{proof}

\subsection{Regression Target Estimation}

\begin{lemma}[Estimate Regression Target] \label{lem:estimate:target}
    It holds with probability at least $1 - \delta/2$ that 
    \$
    \|y_h^k - (\phi_h^k)^\top \theta_h\|_2 \le  w_k
    \$
    for all $(k, h) \in [K] \times [H]$. Meanwhile, for each $k \in [K]$, we need at most 
    \$
    \cO\Big(\frac{H^2 \iota}{ w_k} \Big)
    \$
    episodes to calculate $\{y_h^k\}_{h \in [H]}$. Here $\iota = \cO\Big( \log\big( \frac{dH \log(1 + \frac{T^3}{d}) }{\delta} \big) \Big)$.
\end{lemma}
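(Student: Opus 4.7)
}

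The plan is to reduce the lemma to a direct application of the quantum multivariate mean estimation subroutine (\lem{quantum:mean}) applied in the scalar ($d=1$) case, combined with the linear-mixture identity already derived in \eqref{eq:51} and the oracle construction recalled in \append{appendix_qproboracle_target}. The first step is to observe that, by the definition of the operator $\PP_h$ and the linear mixture structure in \defn{linear:mixture}, the target quantity satisfies $\EE_{\pi^k}[\PP_h V_{h+1}^k(s_h,a_h)] = (\phi_h^k)^\top \theta_h$, so proving $|y_h^k - \EE_{\pi^k}[\PP_h V_{h+1}^k(s_h,a_h)]|\le w_k$ is exactly what is needed.

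Second, I will invoke the oracle construction of \append{appendix_qproboracle_target}: the probability oracle $U'_{k,h}\colon \ket{0}\mapsto \sum_s \sqrt{d^{\pi^k}_{h+1}(s)}\ket{s}$ can be implemented using one call per episode to the CSQA subroutine (\alg{csqa}) with the policy $\pi^k$, and the binary oracle $U'_{V,k,h+1}\colon \ket{s}\ket{0}\mapsto \ket{s}\ket{V_{h+1}^k(s)}$ requires no environment interaction. Since $V_{h+1}^k \in [0,H]$, the random variable to be averaged is bounded by $C=H$. Applying \lem{quantum:mean} with $d=1$, $C=H$, accuracy $\varepsilon = w_k$, and failure probability $\delta' = \delta/(2HK)$ yields a scalar estimate $y_h^k$ with $|y_h^k - (\phi_h^k)^\top \theta_h|\le w_k$ with probability at least $1-\delta'$, at the cost of $\cO(H\log(HK/\delta)/w_k)$ queries to the oracles, each of which corresponds to at most one episode.

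Third, I will take a union bound across all $(k,h)\in [K]\times[H]$. By \lem{phase:size} we have $K \le \cO(dH\log(1+T^3/d))$, so $\log(HK/\delta) \le \cO(\iota)$, and the total failure probability is bounded by $HK\cdot \delta/(2HK) = \delta/2$, giving the first statement of the lemma uniformly over $(k,h)$. For the second statement, I sum the per-step episode cost $\cO(H\iota/w_k)$ over $h=1,\dots,H$ to obtain the claimed per-phase budget of $\cO(H^2\iota/w_k)$.

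I do not expect a serious obstacle here: the identity \eqref{eq:51} turns the regression-target estimation into a scalar expectation, and the oracle construction in \append{appendix_qproboracle_target} makes \lem{quantum:mean} directly applicable. The only minor care-points are (i) correctly charging the CSQA simulations so that one oracle query costs exactly one episode, and (ii) using the bound on $K$ from \lem{phase:size} to absorb $\log(HK/\delta)$ into the $\iota$ notation. Both are routine once \lem{quantum:mean} and \lem{phase:size} are in hand.
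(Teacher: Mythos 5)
Your proposal is correct and follows essentially the same route as the paper: the paper's proof simply invokes \lem{quantum:mean} with $d=1$, $C=H$, and accuracy $w_k$ to get $\cO(H\iota/w_k)$ episodes per step $h$, then sums over $h \in [H]$. The extra details you supply (the identity \eqref{eq:51}, the oracle construction of \append{appendix_qproboracle_target}, the per-call failure probability $\delta/(2HK)$ with the union bound absorbed into $\iota$ via \lem{phase:size}) are exactly the steps the paper leaves implicit, so there is no substantive difference.
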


\begin{proof}[Proof of \lem{estimate:target}]
   For any $h \in [H]$, invoking \lem{quantum:mean} with $d = 1$ and $C = H$, we know that we need at most 
   \$
   \cO\Big( \frac{H \iota}{w_k} \Big) 
   \$
   episodes to calculate $y_h^k$. Taking summation over $h \in [H]$ concludes the proof of \lem{estimate:target}.
\end{proof}

\subsection{Proof of \thm{linear:mixture}} \label{append:pf:linear}

Our proof relies on the following value decomposition lemma.

\begin{lemma}[Value Decomposition Lemma] \label{lem:decomposition}
      Let $\pi = \{\pi_h\}_{h \in [H]}$ and $\hat{\pi} = \{\hat{\pi}_h\}_{h \in [H]}$ be any two policies and $\hat{Q} = \{\hat{Q}_h\}_{h \in [H]}$ be any estimated Q-functions. Moreover, we assume that the estimated value functions $\hat{V} = \{\hat{V}_h\}_{h \in [H]}$ satisfy $\hat{V}_h(s) = \la \hat{Q}_h(\cdot \given s), \hat{\pi}_h(\cdot \given s) \ra_{\cA}$ for all $(s, h) \in \cS \times [H]$. Then we have
      \$
      \widehat{V}_{1}(s_1)-V_{1}^{\pi}(s_1)=\sum_{h=1}^{H} & \mathbb{E}_{\pi}\left[\langle\widehat{Q}_{h}\left(s_{h}, \cdot\right), \hat{\pi}_{h}\left(\cdot \mid s_{h}\right)-\pi_{h}\left(\cdot \mid s_{h}\right)\rangle_{\mathcal{A}} \right] \\ & +\sum_{h=1}^{H} \mathbb{E}_{\pi}\left[\widehat{Q}_{h}\left(s_{h}, a_{h}\right)- R_h(s_h, a_h) - \mathbb{P}_{h} \widehat{V}_{h+1} \left(s_{h}, a_{h}\right) \right] .
      \$
\end{lemma}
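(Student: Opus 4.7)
\textbf{Proof proposal for \lem{decomposition}.} The plan is to establish a single-step identity for $\widehat{V}_h(s_h) - V_h^{\pi}(s_h)$ and then telescope from $h=1$ to $H$ using the fact that $\widehat{V}_{H+1} = V_{H+1}^{\pi} = 0$. The only tools needed are the Bellman equation \eqref{eq:bellman} for $V^{\pi}$, the hypothesis $\widehat{V}_h(s) = \langle \widehat{Q}_h(s,\cdot), \hat{\pi}_h(\cdot \mid s)\rangle_{\cA}$, and the tower property for the operator $\PP_h$ defined in \eqref{eq:def:PP}.

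First I would write, for each fixed $(s, h)$,
\$
\widehat{V}_h(s) - V_h^{\pi}(s) = \langle \widehat{Q}_h(s,\cdot), \hat{\pi}_h(\cdot \mid s) - \pi_h(\cdot \mid s)\rangle_{\cA} + \langle \widehat{Q}_h(s,\cdot) - Q_h^{\pi}(s,\cdot), \pi_h(\cdot \mid s)\rangle_{\cA},
\$
obtained by adding and subtracting $\langle \widehat{Q}_h(s,\cdot), \pi_h(\cdot \mid s)\rangle_{\cA}$ and using both the lemma's hypothesis and $V_h^{\pi}(s) = \langle Q_h^{\pi}(s,\cdot), \pi_h(\cdot \mid s)\rangle_{\cA}$ from \eqref{eq:bellman}. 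Next, for the second inner product I would substitute $Q_h^{\pi}(s,a) = R_h(s,a) + \PP_h V_{h+1}^{\pi}(s,a)$ and insert $\pm \PP_h \widehat{V}_{h+1}(s,a)$, producing the Bellman-residual term $\widehat{Q}_h(s,a) - R_h(s,a) - \PP_h \widehat{V}_{h+1}(s,a)$ plus a remainder $\PP_h(\widehat{V}_{h+1} - V_{h+1}^{\pi})(s,a)$.

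Taking $\EE_\pi$ over $(s_h, a_h)$ distributed according to the occupancy induced by $\pi$, the two inner products become the policy-mismatch and Bellman-residual expectations appearing in the claim, while the remainder rewrites, by the definition \eqref{eq:def:PP} of $\PP_h$ and the tower property, as $\EE_\pi[\widehat{V}_{h+1}(s_{h+1}) - V_{h+1}^{\pi}(s_{h+1})]$. This yields the recursion
\$
\EE_\pi[\widehat{V}_h(s_h) - V_h^{\pi}(s_h)] = (\text{mismatch}_h) + (\text{Bellman-residual}_h) + \EE_\pi[\widehat{V}_{h+1}(s_{h+1}) - V_{h+1}^{\pi}(s_{h+1})].
\$
Unrolling this from $h=1$ (where $s_1$ is fixed) down to $h=H$ and using $\widehat{V}_{H+1} \equiv V_{H+1}^{\pi} \equiv 0$ telescopes the trailing expectations to zero, producing exactly the decomposition in the statement.

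There is no genuine obstacle here: the entire argument is algebraic manipulation plus the tower property. The only detail to be careful about is notational, namely ensuring that the expectation $\EE_\pi$ in every term is interpreted as over the trajectory generated by $\pi$ (not $\hat{\pi}$), which is what makes the $\PP_h \widehat{V}_{h+1}(s_h,a_h)$ collapse into $\widehat{V}_{h+1}(s_{h+1})$ under $\pi$'s transitions and thus telescope cleanly.
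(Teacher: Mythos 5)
Your proposal is correct, and it is essentially the standard telescoping argument: the paper itself does not spell out a proof but defers to Appendix B.1 of Cai et al. (2020), which proceeds by exactly the same single-step decomposition (policy-mismatch term plus Bellman-residual term plus a $\PP_h(\widehat{V}_{h+1}-V^\pi_{h+1})$ remainder) followed by unrolling under $\EE_\pi$ with $\widehat{V}_{H+1}=V^\pi_{H+1}=0$. Your write-up in effect supplies the details the paper omits, with no gaps.
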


\begin{proof}
     See Appendix B.1 of \citet{cai2020provably} for a detailed proof.
\end{proof}

\begin{proof}[Proof of ~\thm{linear:mixture}]
    Under the event defined in \lem{confidence:set}, we have $\theta_h \in \cC_h^k$ for all $(k, h) \in [K] \times [H]$. Together with the optimistic planning (\alg{planning}), we have 
    \# \label{eq:971}
    V_1^*(s_1) \le V_1^k, \quad \forall k \in [K].
    \#
    Hence, we further obtain that 
    \# \label{eq:972}
    V_1^*(s_1) - V_1^{\pi^k}(s_1) &\le V_1^k(s_1) - V_1^{\pi^k}(s_1) \notag \\
    & = \sum_{h = 1}^H \EE_{\pi^k} [Q_h^k(s_h, a_h) - R_h(s_h, a_h) - \PP_h V_{h+1}^k(s_h, a_h)] ,
    \#
    where the inequality follows from \eqref{eq:971} and the equality uses \lem{decomposition}. By the optimistic planning procedure (\alg{planning}), we have
    \# \label{eq:973}
    \EE_{\pi^k} [Q_h^k(s_h, a_h)] = \EE_{\pi^k} \big[R_h(s_h, a_h) + \phi_{V_{h+1}^k}(s, a)^\top \hat{\theta}_h^k \big] = \EE_{\pi^k}[R_h(s_h, a_h)] + ({\phi}_h^k)^\top \hat{\theta}_h^k
    \#
    where $\phi_{V_{h+1}^k}$ id defined in \eqref{eq:def:phi} and the last equality uses the definition of $\phi_h^k$ that $\phi_h^k = \EE_{\pi^k} [\phi_{V_{h+1}^k}(s_h, a_h)]$. Meanwhile, by the same argument in \eqref{eq:def:regression}, we have
    \# \label{eq:974}
    \EE_{\pi^k} [\PP_h V_{h+1}^k(s_h, a_h)] = (\phi_h^k)^\top \theta_h.
    \#
    Plugging \eqref{eq:973} and \eqref{eq:974} into \eqref{eq:972} gives that
    \# \label{eq:9741}
    V_1^*(s_1) - V_1^{\pi^k}(s_1) \le \sum_{h = 1}^H [(\phi_h^k)^\top (\hat{\theta}_h^k - \theta_h)].
    \#
    Recall that in the $k$-th phase, the learning process consists of the feature estimation and the regression target estimation. In the sequel, we establish the regret upper bound incurred by these two parts respectively.

    \noindent{\textbf{Feature Estimation Error:}}
    Applying Cauchy-Schwarz inequality to \eqref{eq:9741} gives that
    \# \label{eq:975}
    V_1^*(s_1) - V_1^{\pi^k}(s_1) \le \sum_{h = 1}^H \|\phi_{h}^k\|_{(\Lambda_h^k)^{-1}} \cdot \|\hat{\theta}_h^k - \theta_h\|_{\Lambda_h^k} . 
    \#
    Under the event $\cE^\dagger$ defined in \lem{confidence:set}, we have 
    \# \label{eq:9751}
    \|\hat{\theta}_h^k - \theta_h\|_{\Lambda_h^k} \le \|\hat{\theta}_h^k - \bar{\theta}_h^k\|_{\Lambda_h^k} + \|\bar{\theta}_h^k - \theta_h\|_{\Lambda_h^k} \le 2\beta_k .
    \#
    for any $h \in [H]$. Together with \eqref{eq:975}, we further obtain that
    \# \label{eq:976}
    V_1^*(s_1) - V_1^{\pi^k}(s_1) \le 2\beta_k \sum_{h = 1}^H \|\phi_{h}^k\|_{(\Lambda_h^k)^{-1}} \le 2H \beta_k \cdot \max_{h \in [H]} \|\phi_{h}^k\|_{(\Lambda_h^k)^{-1}} .
    \# 
    By \lem{estimate:feature}, we know that we use at most 
    \# \label{eq:977}
    \cO\Big( \frac{ \sqrt{d} H^2 \iota}{\max_{h \in [H]} \|\phi_{h}^k\|_{(\Lambda_h^k)^{-1}}} \Big)
    \#
    episodes to estimate features. Combining \eqref{eq:976} and \eqref{eq:977}, we obtain that the feature estimation error of phase $k$ is at most
    \# \label{eq:978}
    \cO( \sqrt{d} H^3 \beta_k \cdot \iota).
    \#

    \noindent{\textbf{Regression Target Estimation Error:}}
    By \eqref{eq:9741}, we have
    \# \label{eq:979}
    V_1^*(s_1) - V_1^{\pi^k}(s_1) & \le \sum_{h = 1}^H [(\phi_h^k)^\top (\hat{\theta}_h^k - \theta_h)] \notag \\
    & = \sum_{h = 1}^H [(\hat{\phi}_h^k)^\top (\hat{\theta}_h^k - \theta_h)] + \sum_{h = 1}^H [(\phi_h^k - \hat{\phi}_h^k)^\top (\hat{\theta}_h^k - \theta_h)] \notag \\
    & \le \sum_{h = 1}^H \|\hat{\phi}_h^k\|_{(\Lambda_h^k)^{-1}} \cdot  \|\hat{\theta}_h^k - \theta_h\|_{\Lambda_h^k} +  \sum_{h = 1}^H \|\phi_h^k - \hat{\phi}_h^k\|_2 \cdot \|\hat{\theta}_h^k - \theta_h\|_2,
    \#
    where the last inequality uses the Cauchy-Schwarz inequality. Under the event $\cE^\dagger$ defined in \lem{confidence:set}, by the same derivation of \eqref{eq:9751}, we have 
    \# \label{eq:9791}
    \|\hat{\theta}_h^k - \theta_h\|_{\Lambda_h^k} \le \|\hat{\theta}_h^k - \bar{\theta}_h^k\|_{\Lambda_h^k} + \|\bar{\theta}_h^k - \theta_h\|_{\Lambda_h^k} \le 2 \beta_k.
    \#
    Meanwhile, by \lem{estimate:feature} and \eqref{eq:bounded:feature}, we have 
    \# \label{eq:9792}
    \|\phi_h^k - \hat{\phi}_h^k\|_2 \cdot \|\hat{\theta}_h^k - \theta_h\|_2 \le  2  w_k.
    \#
    Plugging \eqref{eq:9791} and \eqref{eq:9792} into \eqref{eq:979}, we have
    \# \label{eq:9793}
     V_1^*(s_1) - V_1^{\pi^k}(s_1) &\le 2 \beta_k \sum_{h = 1}^H \|\hat{\phi}_h^k\|_{(\Lambda_h^k)^{-1}} + 2 H w_k \notag \\
     & \le 2 \beta_k H w_k + 2 H w_k  = 2Hw_k (1 + \beta_k) .
    \#
    By \lem{estimate:target}, we know that we need at most 
    \# \label{eq:9794}
    \cO\Big( \frac{H^2 \iota }{ w_k}\Big)
    \#
    episodes to estimate regression targets. Putting \eqref{eq:9793} and \eqref{eq:9794} together, we have that the regression target estimation error of phase $k$ is at most 
    \# \label{eq:9795}
    \cO\Big( {H^3\beta_k \cdot \iota }\Big).
    \#
    Combining \eqref{eq:978} and \eqref{eq:9795}, we have that the regret incurred by the $k$-th phase is at most
    \# \label{eq:9796}
    \cO ( \sqrt{d} H^3 \beta_k \cdot \iota ) = \cO( dH^3 \sqrt{k} \cdot \iota ),
    \#
    where we uses that $\beta_k = 1 + 2\sqrt{dk}$. 
    Hence, the total regret is upper bounded by 
    \$
    \mathrm{Regret}(T) \le \sum_{k = 1}^K \cO( dH^3\sqrt{k} \cdot \iota) \le \cO( dH^3 K^{3/2} \cdot \iota) \le \cO\Big( d^{5/2} H^{9/2} \log^{3/2}\big(1 + \frac{T^3}{d}\big) \cdot \iota \Big),
    \$
    where the first inequality uses \eqref{eq:9796} and the last inequality follows from \lem{phase:size}. 
    Therefore, we finish the proof of \thm{linear:mixture}.
\end{proof}

\end{document}